\documentclass[11pt]{article}
\usepackage{fullpage}
\usepackage{authblk}
\usepackage{graphicx}
\usepackage{float}
\usepackage{amsmath}
\usepackage{amsfonts}
\usepackage{amssymb}
\usepackage{amsthm}
\usepackage{enumerate}

\usepackage{url}
\usepackage{hyperref}

\newtheorem{theorem}{Theorem}
\newtheorem{lemma}{Lemma}

\newtheorem{proposition}{Proposition}

\newtheorem{assumption}{Assumption}

\newtheorem*{claim*}{Claim}
\newtheorem*{remark*}{Remark}

\newtheorem{definition}{Definition}
\newtheorem*{definition*}{Definition}

\makeatletter

\renewcommand\section{%
  \@startsection{section}{1}
                {\z@}%
                {-3.5ex \@plus -1ex \@minus -.2ex}%
                {2.3ex \@plus.2ex}%
                {\large\bfseries}
}
\renewcommand\subsection{%
  \@startsection{subsection}{2}
                {\z@}%
                {-3.25ex\@plus -1ex \@minus -.2ex}%
                {1sp}
                {\normalsize\bfseries}
}
\renewcommand\subsubsection{%
  \@startsection{subsubsection}{3}
                {\z@}%
                {-3.25ex\@plus -1ex \@minus -.2ex}%
                {1sp}
                {\normalfont\normalsize}
}

\makeatother

\title{{\Large\bf  A Complete Characterization of Convexity in Flow Games}\thanks{Supported in part by the National Natural Science Foundation of China (Nos.\,12001507 and 12171444) and Shandong Provincial Natural Science Foundation (No.\, ZR2025MS104).
}}

\author{Han Xiao\thanks{Corresponding author. Email: hxiao@ouc.edu.cn.}, Luying Zhang, Qizhi Fang}

\affil{School of Mathematical Sciences\\Ocean University of China\\Qingdao, China}

\date{\today}

\begin{document}


\maketitle

\openup 1.2\jot


\begin{abstract}

Flow games coincide precisely with the fundamental class of non-negative totally balanced games.
However, the conditions for their convexity have remained elusive.
In this paper, we resolve this challenge by providing a complete characterization.
Specifically, we show that a flow game is convex if and only if its underlying network satisfies three structural conditions: acyclicity, bottleneck exclusivity, and capacity sufficiency.
These structural conditions are also equivalent to dual separability, which resolves the apparent paradox between cycle orientations and game-theoretic convexity by decoupling path contributions via bottleneck exclusivity.
Furthermore, our characterization yields an efficient recognition procedure, establishing that flow game convexity is verifiable in polynomial time.

\hfill

\noindent\textbf{Keywords:} Cooperative games $\cdot$ Totally balanced games $\cdot$ Convex games $\cdot$ Flow games $\cdot$ Structural characterization $\cdot$ Dual separability $\cdot$ Polynomial-time algorithm

\noindent\textbf{Mathematics Subject Classification:}  05C57 $\cdot$ 91A12 $\cdot$ 91A43 $\cdot$ 91A46

\hfill

\end{abstract}

\newpage

\section{Introduction}

Cooperative game theory provides a framework for analyzing how coalitions form and how they share the total profit.
Within this framework, the property of \emph{convexity} (or \emph{supermodularity}) plays a pivotal role. 
A cooperative game is convex if the marginal contribution of any player is non-decreasing with respect to coalition inclusion. 
This property provides significant theoretical and computational advantages. 
Theoretically, convexity guarantees a non-empty core with a regular geometric structure. 
Specifically, the core coincides with the convex hull of all marginal contribution vectors~\cite{Shap71}. 
Moreover, the Shapley value lies at the barycenter of the core. 
From a dynamic perspective, convex games always admit \emph{population monotonic allocation schemes} (PMAS)~\cite{Spru90}, which guarantee that each player's payoff does not decrease as the coalition expands. 
Computationally, convexity leads to the coincidence of the \emph{nucleolus} and the \emph{kernel}~\cite{MPS72, MPS79}. 
Consequently, while finding the nucleolus is generally \textsf{NP}-hard, the nucleolus of a convex game can be computed efficiently in polynomial time~\cite{FKK01}.

An important class of cooperative games is that of \emph{combinatorial optimization games}~\cite{DIN99}, where the characteristic function is defined by the optimal value of an underlying combinatorial optimization problem. 
A prominent and widely studied model in this category is the \emph{flow game}~\cite{KZ82a,KZ82b}, derived from the maximum flow problem on directed networks. 
In a flow game, players correspond to arcs, and the value of a coalition is the maximum flow that can be routed from a source to a sink using exclusively the arcs in that coalition. 
Flow games provide a canonical framework for resolving profit sharing and resource distribution in capacitated transportation, supply chain, and telecommunication networks.

The deep connection between flow networks and cooperative game theory was first established by Kalai and Zemel~\cite{KZ82b}. 
They proved that the class of flow games coincides precisely with the class of non-negative \emph{totally balanced games}. 
This equivalence implies that any cooperative game where every subgame has a non-empty core can be represented as a flow game on some directed network. 
However, while Kalai and Zemel's result guarantees the existence of a non-empty core, it provides no assurance regarding computational tractability. 
In fact, testing core membership for general flow games was later proven to be co-\textsf{NP}-complete~\cite{FZCD02}. 
Furthermore, computing the Shapley value is \textsf{\#P}-complete~\cite{DP94}. 
Although efficient algorithms exist for specific subclasses such as simple flow games~\cite{DFS06, PRB06, RMPT96}, computing the nucleolus remains \textsf{NP}-hard in general~\cite{DFS06, KP09}. 

This gap between theoretical stability and computational intractability has motivated the search for network conditions that guarantee \emph{convexity}.
Such a guarantee would extend the computational advantages of convexity to flow games, bypassing their inherent complexity. 
Despite several decades of research, providing a complete structural characterization of convexity for general flow games has remained an elusive challenge. 
Previous studies have explored the relationship between network topology and convexity, primarily by establishing partial conditions or identifying specific network subclasses.
For instance, Granot and Veinott~\cite{GV85} demonstrated that the convexity of a flow game is closely related to the orientation of arcs within underlying undirected cycles, establishing that \emph{complementary arcs} induce convexity whereas \emph{substitute arcs} result in concavity.
Building on specific structural instances, Granot and Granot~\cite{GG92} later proved that the nucleolus coincides with the kernel for flow games defined on \emph{directed augmented trees}. 
This property is fundamental to convex games, yet such networks do not necessarily induce convex flow games. 
Although these foundational works provide valuable insights into how network structures affect the characteristic function, they primarily offer specific sufficient conditions rather than a definitive classification.
In particular, the topological criteria in~\cite{GV85} leave an apparent paradox, as they often classify convex networks as having substitute relationships that, by definition, should destroy convexity. 
A complete characterization that resolves this structural ambiguity has thus not yet been achieved.
The importance of resolving this challenge extends far beyond the domain of flow games. 
Given the equivalence established by Kalai and Zemel~\cite{KZ82b}, achieving a complete characterization is of great significance not only for flow games but also for the broader class of totally balanced games.

In this paper, we provide a complete characterization of the networks that induce convex flow games. 
We demonstrate that convexity in flow games is determined by the absence of capacity competition among the $s$-$t$ paths that cover the network.
We begin by establishing five structural properties that any convex flow game must satisfy. 
Beyond serving as the analytical foundation for our main characterization theorem, these properties offer a profound understanding of the structures that collectively induce convexity. 
These structural conditions further manifest as dual separability and dual monotonicity, two refined dual properties that provide the optimization-theoretic bridge between network structure and flow game convexity.
Building upon these structural and dual insights, we establish a unified characterization theorem that integrates network structure, linear programming duality, and Harsanyi dividends. 
We prove that a flow game is convex if and only if the underlying network satisfies three structural properties: acyclicity, bottleneck exclusivity, and capacity sufficiency. 
Crucially, we show that these structural conditions are equivalent to the property of \emph{dual separability}, which resolves the apparent paradox in the framework of~\cite{GV85} by revealing that substitution only disrupts convexity when paths compete for shared bottlenecks.
This structural decoupling not only guarantees convexity but also enables the constructive, path-based representation of stable solutions such as population monotonic allocation schemes (PMAS) and the Shapley value.
Since flow games and non-negative totally balanced games are equivalent~\cite{KZ82b}, our characterization applies to this fundamental class of cooperative games.
Furthermore, this characterization yields an efficient recognition procedure, establishing that the convexity of flow games is verifiable in polynomial time directly from the network structure.

More broadly, our analysis highlights the deep connection between game theory and combinatorial optimization. 
From a game perspective, we use network properties to draw clear boundaries within the class of totally balanced games. 
From an optimization perspective, we show that game theory concepts such as convexity and stability result directly from the interaction between network structure and linear programming duality. 
By grounding game properties in optimization structures, this approach may provide a solid foundation for exploring broader classes of combinatorial games.

The remainder of this paper is organized as follows. 
Section~\ref{sec:preliminaries} provides the necessary preliminaries on cooperative games and flow networks. 
Section~\ref{sec:char} presents our main results: it identifies the structural and dual foundations of convexity, establishes a unified characterization theorem with a polynomial-time recognition algorithm. 
Finally, Section~\ref{sec:con} offers concluding remarks.

\section{Preliminaries}
\label{sec:preliminaries}

This section reviews the fundamental concepts from cooperative game theory and network flow theory that form the basis of our analysis.

\subsection{Cooperative game theory}

A \emph{cooperative game} in characteristic function form is a pair $(N,\gamma)$, often simply denoted by $\Gamma$, where $N$ is a set of \emph{players} and $\gamma: 2^N \to \mathbb{R}$ is the \emph{characteristic function}. 
The set $N$ is called the \emph{grand coalition} and any subset $S\subseteq N$ is called a \emph{coalition}.
The function $\gamma$ assigns a value to each coalition, with $\gamma(\emptyset)=0$.
A player $i \in N$ is called a \emph{dummy player} if $\gamma(S \cup \{i\}) = \gamma(S)$ for all $S \subseteq N \setminus \{i\}$. 
A player $i \in N$ is called an \emph{essential player} if $\gamma(N) - \gamma(N \setminus \{i\}) > 0$.

A game $\Gamma=(N,\gamma)$ is \emph{monotonic} if $\gamma(S) \leq \gamma(T)$ for any $S \subseteq T \subseteq N$. 
A game $\Gamma=(N,\gamma)$ is \emph{convex} (or \emph{supermodular}) if the marginal contribution of any player is non-decreasing as the coalition grows:
\begin{equation*}
\gamma(S \cup \{i\}) - \gamma(S) \leq \gamma(T \cup \{i\}) - \gamma(T)
\end{equation*}
for any player $i \in N$ and any two coalitions $S \subseteq T \subseteq N \setminus \{i\}$. 
A fundamental class of convex games is the family of unanimity games. 
For any non-empty coalition $T\subseteq N$, the \emph{unanimity game} is the pair $(N,\zeta_T)$, where the characteristic function $\zeta_T$ is defined as:
\begin{equation*}
\zeta_T(S) =  
\begin{cases}
1 & \text{if } T \subseteq S, \\
0 & \text{otherwise.}
\end{cases}
\end{equation*}
The set $\{\zeta_T\}_{\emptyset \neq T\subseteq N}$ forms a basis for the vector space of characteristic functions.
Thus, the characteristic function $\gamma$ can be uniquely expressed as the linear combination
\begin{equation*}
\gamma =\sum_{T\subseteq N} \Delta_{\gamma}(T) \zeta_T,
\end{equation*}
where the coefficients $\Delta_{\gamma}(T)$ are known as the \emph{Harsanyi dividends}.
A sufficient condition for a game to be convex is that all its Harsanyi dividends are non-negative~\cite{Hars59}.

A primary objective in cooperative game theory is to formulate principles for distributing the total value $\gamma(N)$ among the players. 
An \emph{allocation} is a vector $\boldsymbol{x} \in \mathbb{R}^{|N|}$. 
For any coalition $S \subseteq N$, we denote $x(S) = \sum_{i \in S} x_i$. 
An allocation is \emph{efficient} if $x(N) = \gamma(N)$.
The most prominent solution concept is the \emph{core}. 
The \emph{core} of a game $\Gamma$, denoted by $\mathcal{C}(\Gamma)$, is the set of all efficient allocations that satisfy \emph{coalitional rationality}, meaning no coalition has an incentive to secede from the grand coalition. 
Formally:
\begin{equation*}
\mathcal{C}(\Gamma) = \left\{ \boldsymbol{x} \in \mathbb{R}^{|N|} \;\middle|\; x(N) = \gamma(N) \text{ and } x(S) \ge \gamma(S) \text{ for all } S \subseteq N \right\}.
\end{equation*}
The most prominent single-valued solution concept is the \emph{Shapley value}~\cite{Shap53}, which for any convex game is stable and lies at the barycenter of the core.

The existence of a non-empty core classifies cooperative games into specific families. 
A game $\Gamma$ is called \emph{balanced} if its core is non-empty. 
This condition ensures that the grand coalition is stable.
A stronger stability concept considers the stability of all coalitions. 
For any non-empty coalition $S \subseteq N$, the \emph{subgame} restricted to $S$ is the pair $(S, \gamma|_S)$, where $\gamma|_S$ is the restriction of the characteristic function to subsets of $S$. 
A game $\Gamma$ is called \emph{totally balanced} if the core of every subgame is non-empty.
A dynamic stability concept is the \emph{population monotonic allocation scheme} (PMAS)~\cite{Spru90}. 
Rather than a single allocation for the grand coalition, a PMAS provides an allocation for every coalition such that players' payoffs are non-decreasing as the underlying coalition grows.
Specifically, a PMAS for a cooperative game $\Gamma=(N,\gamma)$ is a collection of vectors $\{\boldsymbol{x}_S\}_{\emptyset \subsetneq S\subseteq N}$ where $\boldsymbol{x}_S = (x_{S,i})_{i \in S} \in \mathbb{R}^{|S|}$ for each non-empty coalition $S \subseteq N$, such that $x_S(S) = \gamma(S)$ for all $\emptyset \neq S \subseteq N$, and $x_{S,i} \le x_{T,i}$ for all $S \subseteq T \subseteq N$ and all $i \in S$.

These stability concepts are linked by a hierarchical relationship.
If a game $\Gamma$ admits a PMAS, then for each non-empty coalition $S \subseteq N$, the specified vector $\boldsymbol{x}_S$ belongs to the core of the subgame $(S, \gamma|_S)$. Thus, any game possessing a PMAS is totally balanced. 
Furthermore, every convex game possesses a PMAS~\cite{Spru90}, which strengthens the fundamental result by~\cite{Shap71} that convex games are totally balanced. 
Consequently, we have the following inclusion of game classes:
\[ \text{Convex Games} \subseteq \text{Games with a PMAS} \subseteq \text{Totally Balanced Games} \subseteq \text{Balanced Games}. \]

\subsection{Flow games}

Flow games, introduced by~\cite{KZ82b}, model cooperative situations arising from maximum flow problems on directed networks. 
Let $D=(V,E;c;s,t)$ be a \emph{flow network} consisting of a vertex set $V$, an arc set $E$, a source $s \in V$, a sink $t \in V$, and an arc capacity function $c: E \to \mathbb{R}_{\ge 0}$.
The associated \emph{flow game} is defined as the pair $\Gamma_D=(E,\gamma)$, where the players are the arcs. 
Players cooperate with each other to maximize flows from $s$ to $t$.
For any coalition $S \subseteq E$, the value of $\gamma(S)$ is given by the value of a maximum $s$-$t$ flow in the subnetwork $D_S = (V, S; c|_S; s, t)$, where only arcs in $S$ are available. 

Our analysis is based on the \emph{flow decomposition theorem}~\cite{Schr03}, which states that any feasible flow in a network can be decomposed into a sum of flows along $s$-$t$ paths and flows along cycles. 
Since cycles do not contribute to the net flow value, the study of the flow game can be equivalently framed as optimizing flows over the set of paths. 
Accordingly, we define the basic structural concepts as follows. 
A \emph{path} is a sequence of distinct vertices and arcs connecting a starting vertex to an ending vertex. 
For convenience, we identify each path with its set of arcs.
A \emph{cycle} is a sequence of vertices and arcs where the initial and terminal vertices coincide, while all internal vertices are distinct. 
Unless otherwise stated, all paths and cycles are assumed to be directed.
A network is \emph{acyclic} if it contains no cycles. 

Let $\mathcal{P}$ denote the set of all $s$-$t$ paths in $D$. 
For any $S \subseteq E$, let $\mathcal{P}_S = \{ P \in \mathcal{P} \mid P \subseteq S \}$ denote the set of all $s$-$t$ paths available in the subnetwork $D_S$. 
The maximum flow value $\gamma(S)$ is the optimal value of the following linear programming problem:
\begin{equation}\label{eq:primal_flow}
\begin{aligned}
\max \quad & \sum_{P \in \mathcal{P}_S} z_P \\
\text{s.t.} \quad & \sum_{P \in \mathcal{P}_S : e \in P} z_P \le c(e), \quad \forall e \in S, \\
& z_P \ge 0, \quad \forall P \in \mathcal{P}_S.
\end{aligned}
\end{equation}
The dual of \eqref{eq:primal_flow} assigns a non-negative weight $y_e$ to each arc $e \in S$, reflecting its shadow price:
\begin{equation}\label{eq:dual_flow}
\begin{aligned}
\min \quad & \sum_{e \in S} c(e) y_e \\
\text{s.t.} \quad & \sum_{e \in P} y_e \ge 1, \quad \forall P \in \mathcal{P}_S, \\
& y_e \ge 0, \quad \forall e \in S.
\end{aligned}
\end{equation}
By the strong duality theorem, the primal and dual optimal values coincide. 
The dual variables $y_e$ represent the marginal importance of arc capacities to the total flow value.
By the max-flow min-cut theorem, $\gamma(S)$ is equivalently the minimum capacity of an $s$-$t$ cut in $D_S$. 
Formally, an \emph{$s$-$t$ cut} in $D$ is a subset of arcs $K \subseteq E$ such that there is no $s$-$t$ path in the graph $(V, E \setminus K)$, and its capacity is defined as $c(K) = \sum_{e \in K} c(e)$.

Beyond these duality properties, our structural analysis also relies on the following path-specific concepts. 
For any path $P \in \mathcal{P}$, its capacity is defined as $c(P) = \min_{e \in P} c(e)$. 
An arc $e \in E$ is a \emph{bottleneck arc} if there exists a path $P \in \mathcal{P}$ such that $e \in P$ and $c(e) = c(P)$.
Let $B$ denote the set of all bottleneck arcs in $D$.  
For any arc $e \in E$, let $\mathcal{P}_e = \{ P \in \mathcal{P} \mid e \in P \}$ be the set of all $s$-$t$ paths containing arc $e$. 
Given a path $P$ and two vertices $u, v$ on $P$, we denote by $P[u,v]$ the contiguous subpath from $u$ to $v$ along $P$. 

To eliminate trivial cases and ensure the relevance of every player, we impose the following topological assumption on $D$.

\begin{assumption}\label{assumption:connectivity}
  Every arc $e \in E$ belongs to at least one $s$-$t$ path.
\end{assumption}

Assumption~\ref{assumption:connectivity} implies that the network contains no pendant cycles, dead ends, or disconnected components relative to the flow direction. 
Any arc violating this condition inherently contributes zero to every coalition, acting as a \emph{dummy player}. 
Consequently, such arcs are excluded from our analysis.
For any given network, we can efficiently transform it to satisfy Assumption~\ref{assumption:connectivity} without altering the underlying cooperative structure of the game, as formalized below.

\begin{lemma}\label{lem:network_reduction}
The flow game $\Gamma_D$ defined on a network $D=(V,E;c;s,t)$ can be transformed in $O(|V| + |E|)$ time into an equivalent flow game $\Gamma_{D'}$ defined on a reduced network $D'$ that satisfies Assumption~\ref{assumption:connectivity}.
\end{lemma}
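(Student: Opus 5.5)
We reduce the network by deleting every arc that lies on no $s$-$t$ path, and then argue that the deleted arcs are dummy players. The first step is to characterize the useful arcs. An arc $e=(u,v)$ lies on some $s$-$t$ path if and only if $u$ is reachable from $s$ and $t$ is reachable from $v$ in $D$.

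The forward implication is immediate. For the converse, take a walk $s\rightsquigarrow u \to v \rightsquigarrow t$ built from an $s$-$u$ path $Q_1$ and a $v$-$t$ path $Q_2$. This concatenation need not be a simple path, so the reachability condition as stated is not sufficient. The correct test must avoid repeated vertices, and this is the main obstacle.

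To handle it, define $E'$ as the set of arcs that lie on at least one $s$-$t$ path, by definition. Every arc outside $E'$ is a dummy. Indeed, for any $S$, a maximum flow in $D_S$ decomposes into flows along $s$-$t$ paths and cycles by the flow decomposition theorem. The cycle parts can be dropped without changing the value. So $\gamma(S)$ equals the optimum of \eqref{eq:primal_flow} over $\mathcal{P}_S$, and no path in $\mathcal{P}_S$ uses an arc outside $E'$. Hence $\gamma(S)=\gamma(S\cap E')$ for every $S$.

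Let $D'=(V,E';c|_{E'};s,t)$. Then $\Gamma_{D'}$ is the restriction of $\Gamma_D$ to $E'$, and the arcs in $E\setminus E'$ are dummies with $\gamma(S)=\gamma(S\cap E')$. In this sense the two games are equivalent: they have the same values up to adding dummy players. The network $D'$ satisfies Assumption~\ref{assumption:connectivity}, because each path witnessing membership in $E'$ uses only arcs of $E'$.

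It remains to compute $E'$ in $O(|V|+|E|)$ time. Deciding whether a single arc lies on a simple $s$-$t$ path is a two-disjoint-paths type question. I would first try the pruning approach. Remove arcs into $s$ and arcs out of $t$, since these can never lie on a simple $s$-$t$ path. Then run a forward BFS from $s$ and a backward BFS from $t$, and keep the arcs $(u,v)$ with $u$ forward-reachable and $v$ backward-reachable.

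This linear-time filter is sound for acyclic parts. However, it can still keep an arc that lies only on a walk, namely when every $s$-$u$ path meets every $v$-$t$ path. Such an arc may still be a dummy, and when it is, the filter's output violates Assumption~\ref{assumption:connectivity} rather than keeping an equivalent game.

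Certifying the remaining arcs is therefore the step I expect to be hardest. I would attempt it via a dominator-tree or strong-component decomposition computed in linear time, with the time claim relying on this. If that fails, the fallback is to note that the filter preserves the game exactly but only achieves a weaker form of Assumption~\ref{assumption:connectivity}.
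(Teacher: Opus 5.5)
Your objection to the reachability test is correct, and the paper's own proof rests on exactly that step. It asserts that $e=(u,v)$ lies on an $s$-$t$ path if and only if $u\in V_s$ and $v\in V_t$, deletes the other arcs, and stops. With arcs $(s,a),(a,b),(b,a),(a,t)$, both $(a,b)$ and $(b,a)$ pass this test yet lie on no $s$-$t$ path. In general, the kept arcs that should have been removed are precisely those for which every $s$-$u$ path meets every $v$-$t$ path. By your own decomposition argument all of them are dummies, not merely ``may be''. Your proofs that $\gamma(S)=\gamma(S\cap E')$ and that $D'=(V,E';c|_{E'};s,t)$ satisfies Assumption~\ref{assumption:connectivity} are fine.

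The gap is the final step, computing $E'$ in $O(|V|+|E|)$ time, and no dominator-tree or strong-component argument can close it. Note that $(u,v)\in E'$ iff there exist vertex-disjoint $s$-$u$ and $v$-$t$ paths. Take an instance $(G,s_1,t_1,s_2,t_2)$ of the directed vertex-disjoint two-paths problem, which Fortune, Hopcroft and Wyllie showed is \textsf{NP}-complete. A new arc $(t_1,s_2)$ lies on a simple $s_1$-$t_2$ path iff the instance is a yes-instance.

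The obstruction is not specific to your method either. With positive capacities, $e\in P$ gives $\gamma(P)-\gamma(P\setminus\{e\})=c(P)>0$, so the non-dummy arcs are exactly $E'$. Hence any subnetwork reduction yielding an equivalent game that satisfies Assumption~\ref{assumption:connectivity} must output $E'$. So, unless \textsf{P} $=$ \textsf{NP}, the lemma cannot hold for general digraphs, and the paper's proof has precisely the hole you identified.

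What survives is what your filter gives. If $D$ is acyclic, every walk $s\rightsquigarrow u\to v\rightsquigarrow t$ is a path, so the two-search filter (the paper's procedure) is exact and runs in linear time. In general the filter preserves the game but only achieves the walk version of Assumption~\ref{assumption:connectivity}, as your fallback says.

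This weaker version is not a drop-in replacement for what follows in the paper. Proposition~\ref{prop:arc_essentiality} needs a genuine path through each arc. In the four-arc example with unit capacities, the game is the unanimity game $\zeta_{\{(s,a),(a,t)\}}$ on the four arcs, hence convex. Yet the filtered network still contains the cycle $a\to b\to a$. So your proposal, with its fallback, proves a correct restricted lemma, not the statement as written.
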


\begin{proof}
To ensure every arc belongs to at least one $s$-$t$ path, we identify and eliminate all arcs that are incapable of carrying flow from the source to the sink. 
First, we compute the set $V_s$ of vertices reachable from the source $s$ using a forward reachability search (e.g., breadth-first search). 
Next, we compute the set $V_t$ of vertices capable of reaching the sink $t$ via a backward reachability search. 
This involves a forward search from $t$ in the reversed network, where the direction of every arc is reversed. 
Both traversals operate in linear time, taking $O(|V| + |E|)$ time overall.

An arc $e=(u,v) \in E$ belongs to an $s$-$t$ path if and only if its tail $u$ is reachable from $s$ ($u \in V_s$) and its head $v$ can reach $t$ ($v \in V_t$). 
Any arc failing this condition cannot carry flow, thereby making it a dummy player that contributes nothing to any coalition. 
Eliminating all such arcs from $E$ yields a reduced subnetwork $D'$ that satisfies Assumption~\ref{assumption:connectivity}. 
Since these arcs correspond to dummy players, the resulting flow game $\Gamma_{D'}$ remains equivalent to the original game $\Gamma_{D}$. 
Since this step requires only a single pass over the arcs, the entire reduction procedure is completed in $O(|V| + |E|)$ time. 
\end{proof}

\section{A Unified Characterization of Flow Game Convexity}
\label{sec:char}

In this section, we provide a complete and unified characterization of convex flow games. 
Our analysis is organized as follows. 
First, we establish the necessary structural properties (Propositions~\ref{prop:arc_essentiality}--\ref{prop:capacity_sufficiency}) that any network inducing a convex flow game must satisfy. 
Next, based on these structural foundations, we identify two refined dual properties, namely dual separability and dual monotonicity, that are also necessary for convexity. 
Building upon these topological and dual insights, we derive our main result (Theorem~\ref{thm:unified_characterization}), which provides a necessary and sufficient characterization. 
Furthermore, we show that this characterization yields a polynomial-time algorithm (Theorem~\ref{thm:poly_recognition}) for convexity recognition.
Finally, we leverage these structural insights to derive efficient path-based profit-sharing mechanisms for convex flow games.

\subsection{Structural Necessity for Convexity}
\label{subsec:structural_necessity}

\begin{proposition}[Arc essentiality]\label{prop:arc_essentiality}
If $\Gamma_D$ is convex, then every arc $e \in E$ is essential and carries strictly positive flow in every maximum flow of the network $D$.
\end{proposition}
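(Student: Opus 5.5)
The plan is to get both claims from one convexity comparison: measure the marginal contribution of an arbitrary arc $e$ against a single $s$-$t$ path through it, then lift that to the grand coalition. Throughout I assume $c(e)>0$ for every arc. An arc of capacity zero carries no flow in any coalition, so it is a dummy player in the same sense as the arcs removed in Lemma~\ref{lem:network_reduction}, and I would discard it in the same way. This is the one point that needs care: with $c(e)=0$ allowed, the statement would fail, since a zero-capacity arc is never essential and still leaves the game convex.

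\emph{Step 1: a coalition where $e$ is pivotal.} Fix $e\in E$. By Assumption~\ref{assumption:connectivity} there is a path $P\in\mathcal{P}$ with $e\in P$. Consider the coalitions $S=P\setminus\{e\}$ and $T=E\setminus\{e\}$, so that $S\subseteq T\subseteq E\setminus\{e\}$.
\begin{itemize}
\item Since $P$ is a simple path, $S$ is the disjoint union of the segments $P[s,u]$ and $P[v,t]$, where $e=(u,v)$. The vertices reachable from $s$ using only arcs of $S$ are exactly those on $P[s,u]$, and $t$ is not among them. Hence $\mathcal{P}_S=\emptyset$ and $\gamma(S)=0$.
\item In $D_P$ the unique $s$-$t$ path is $P$ itself, so $\gamma(S\cup\{e\})=\gamma(P)=c(P)$.
\item Because all capacities are positive, $c(P)=\min_{f\in P}c(f)>0$.
\end{itemize}

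\emph{Step 2: essentiality via convexity.} Applying the convexity inequality to player $e$ with $S\subseteq T$ gives
\[
\gamma(E)-\gamma(E\setminus\{e\}) \;\ge\; \gamma(P)-\gamma(P\setminus\{e\}) \;=\; c(P) \;>\; 0,
\]
so $e$ is essential.

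\emph{Step 3: positive flow in every maximum flow.} Let $f$ be any maximum $s$-$t$ flow in $D$, taken as an arc flow, possibly containing cycles. Suppose $f(e)=0$. Then $f$ restricted to $E\setminus\{e\}$ still satisfies the capacity and conservation constraints in $D_{E\setminus\{e\}}$, and it has the same value $\gamma(E)$. This gives $\gamma(E\setminus\{e\})\ge\gamma(E)$, contradicting Step 2. Hence $f(e)>0$.

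If the maximum flow is instead given in path form via \eqref{eq:primal_flow}, the same argument applies: the paths with $z_P>0$ avoid $e$, so they are feasible for $E\setminus\{e\}$ with the same total value, giving the same contradiction.
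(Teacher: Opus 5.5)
Your proof is correct and follows the paper's own argument: you compare the marginal contribution of $e$ to $P\setminus\{e\}$, which is $c(P)>0$, with its contribution to $E\setminus\{e\}$, and then note that a maximum flow with $f(e)=0$ would stay feasible and optimal after deleting $e$. Your remark about zero-capacity arcs is apt, since the paper silently uses $c(P)>0$ while allowing $c:E\to\mathbb{R}_{\ge 0}$, and you also spell out the flow-restriction step that the paper only asserts.
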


\begin{proof}
First, we show that every arc has a strictly positive marginal contribution to the grand coalition, i.e., every arc is essential.
Assume to the contrary that there exists an arc $e\in E$ such that $\gamma (E)-\gamma (E\setminus \{e\})=0$.
By Assumption~\ref{assumption:connectivity}, there exists an $s$-$t$ path $P$ containing $e$.
Clearly, $P\subseteq E$.
Moreover, $\gamma (P) - \gamma(P\setminus \{e\}) = c(P) -0  = c(P) >0$.
Therefore, we have
\[
\gamma (P) - \gamma(P\setminus \{e\}) > \gamma (E)-\gamma (E\setminus \{e\}),
\]
which contradicts the convexity of $\Gamma_D$.
Thus, the marginal contribution of any arc $e$ to the grand coalition $E$ is strictly positive: $\gamma(E) - \gamma(E \setminus \{e\}) > 0$.
Since $\gamma(E)$ corresponds to the value of a maximum flow, $\gamma(E) - \gamma(E \setminus \{e\}) > 0$ implies that removing $e$ strictly decreases the maximum flow value. 
Therefore, every arc must carry strictly positive flow in every maximum flow of $D$.
\end{proof}

\begin{proposition}[Acyclicity]\label{prop:acyclicity}
If $\Gamma_D$ is convex, then the network $D$ must be acyclic.
\end{proposition}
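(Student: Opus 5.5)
The plan is to argue by contradiction using Proposition~\ref{prop:arc_essentiality}, which we may assume. Suppose $\Gamma_D$ is convex but $D$ contains a directed cycle $C$. By Proposition~\ref{prop:arc_essentiality}, every arc of $D$, and in particular every arc of $C$, carries strictly positive flow in \emph{every} maximum $s$-$t$ flow of $D$. I will construct a maximum flow in which some arc of $C$ carries zero flow, which is the desired contradiction.

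The construction is a cycle cancellation. Fix an arbitrary maximum flow $f$ of $D$, viewed as an arc flow satisfying $0 \le f(e) \le c(e)$ and conservation at every vertex other than $s$ and $t$. If some arc of $C$ already has $f(e)=0$, we are done immediately. Otherwise set $\varepsilon = \min_{e \in C} f(e) > 0$ and define $f'$ by
\[
f'(e) = f(e) - \varepsilon \text{ for } e \in C, \qquad f'(e) = f(e) \text{ otherwise.}
\]

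We then check three properties of $f'$.
\begin{enumerate}[(i)]
\item Capacity feasibility: $0 \le f'(e) \le f(e) \le c(e)$ for every arc $e$.
\item Conservation: every vertex $v$ on $C$ has exactly one incoming and one outgoing arc of $C$. Hence both the inflow and the outflow at $v$ drop by $\varepsilon$, and the net flow at every vertex is unchanged.
\item Flow value: the value is unchanged even if $s$ or $t$ lies on $C$, since the net outflow at $s$ is preserved by the same argument.
\end{enumerate}
So $f'$ is again a maximum flow. Equivalently, in the path formulation \eqref{eq:primal_flow}, this is just the remark from the flow decomposition theorem that cycle components contribute nothing to the flow value. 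By the choice of $\varepsilon$, some arc $e^* \in C$ has $f'(e^*) = 0$. This contradicts Proposition~\ref{prop:arc_essentiality}, so $D$ must be acyclic.

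The only step requiring care is making sure that Proposition~\ref{prop:arc_essentiality} is really applicable in the ``every maximum flow'' form, that is, that it refers to arc flows and not merely to some chosen path decomposition. If one prefers not to lean on that final sentence of its proof, I would instead argue directly from the statement $\gamma(E) - \gamma(E \setminus \{e^*\}) > 0$. The flow $f'$ uses no capacity on $e^*$, so it is a feasible flow in $D_{E \setminus \{e^*\}}$ of value $\gamma(E)$. This gives $\gamma(E \setminus \{e^*\}) \ge \gamma(E)$, which contradicts the essentiality of $e^*$ established in Proposition~\ref{prop:arc_essentiality}. This second route is more robust, and I would present it as the main argument.
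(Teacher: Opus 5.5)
Your proposal is correct and matches the paper's proof: cancel $\min_{e\in C} f(e)$ units of flow around the cycle to get a maximum flow that vanishes on some arc of $C$, which contradicts Proposition~\ref{prop:arc_essentiality}. Your preferred ``robust'' ending is not a different argument; it only makes explicit the step Proposition~\ref{prop:arc_essentiality} uses to derive its ``every maximum flow'' clause from essentiality.
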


\begin{proof}
Assume to the contrary that the network $D$ contains a cycle $C$.
Let $f$ be an arbitrary maximum flow in $D$. 
Let $\delta=\min_{e \in C} f(e)$.
By Proposition~\ref{prop:arc_essentiality}, the convexity of $\Gamma_D$ implies that $f(e) > 0$ for every arc $e \in C$.
Hence we have $\delta>0$.
Let $f'$ be a new flow constructed by subtracting $\delta$ units of flow along the cycle $C$:
\[
f'(e) = 
\begin{cases} 
f(e) - \delta & \text{if } e \in C, \\
f(e) & \text{if } e \notin C.
\end{cases}
\]
Since $C$ is a cycle, flow conservation is maintained at every vertex, and the net flow value from $s$ to $t$ remains unchanged. Thus, $f'$ is also a maximum flow.

However, by the choice of $\delta$, there exists an arc $e^* \in C$ such that $f(e^*) = \delta$ and consequently, $f'(e^*) = 0$.
This contradicts Proposition~\ref{prop:arc_essentiality}, which requires every arc to carry strictly positive flow in every maximum flow. 
Therefore, no cycle exists in $D$.
\end{proof}

\begin{proposition}[Subpath disjointness]\label{prop:subpath_disjointness}
If $\Gamma_D$ is convex, for any two distinct $s$-$t$ paths $P$ and $Q$ that diverge at $u$ and merge at $v$ (sharing the initial subpath $P[s,u] = Q[s,u]$ and the terminal subpath $P[v,t] = Q[v,t]$), the subpaths $P[u,v]$ and $Q[u,v]$ must be internally vertex-disjoint.
Moreover, the joint capacity of $P[u,v]$ and $Q[u,v]$ is bounded by
    \[ c(P[u,v]) + c(Q[u,v]) \leq \min \{ c(P[s,u]), c(P[v,t]) \}. \]
We take $c(P[s,u])=+\infty$ for $u=s$ and $c(P[v,t])=+\infty$ for $v=t$ to handle trivial subpaths.
\end{proposition}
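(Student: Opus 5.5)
The plan is to extract everything from one consequence of supermodularity applied to two $s$-$t$ paths viewed as coalitions. Convexity is equivalent to $\gamma(A\cup B)+\gamma(A\cap B)\ge \gamma(A)+\gamma(B)$. Take $A=P$, $B=Q$ for distinct $s$-$t$ paths. Then $\gamma(P)=c(P)$ and $\gamma(Q)=c(Q)$. Moreover $\gamma(P\cap Q)=0$: an $s$-$t$ path contained in $P$ must be $P$ itself, and $P\not\subseteq Q$. Hence
\[ \gamma(P\cup Q)\ \ge\ c(P)+c(Q). \]
Proposition~\ref{prop:arc_essentiality} also gives $c(e)>0$ for all $e$, since a zero-capacity arc is never essential.

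\textbf{Step 1: the capacity bound.} First I prove it without using disjointness. Suppose $P,Q$ share the prefix $P[s,u]$ and the suffix $P[v,t]$. Each vertex of the prefix other than $u$ has exactly one outgoing arc in $P\cup Q$, namely the prefix arc. Therefore every $s$-$t$ path in $P\cup Q$ traverses all of $P[s,u]$, so any single prefix arc $e$ is an $s$-$t$ cut of $D_{P\cup Q}$. The symmetric argument (unique incoming arcs) applies to suffix arcs. Thus $\gamma(P\cup Q)\le m:=\min\{c(P[s,u]),c(P[v,t])\}$. Write $a=c(P[u,v])$ and $b=c(Q[u,v])$, so that $c(P)=\min(a,m)$ and $c(Q)=\min(b,m)$. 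We obtain
\[ \min(a,m)+\min(b,m)\le m. \]
If $a\ge m$ this would force $\min(b,m)\le 0$, contradicting positivity; so $a<m$, likewise $b<m$, and hence $a+b\le m$. When $u=s$ and $v=t$ we have $m=+\infty$ and the bound is vacuous, as the statement intends.

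\textbf{Step 2: internal vertex-disjointness.} Suppose $P[u,v]$ and $Q[u,v]$ share an internal vertex, and let $w$ be such a vertex. Because they diverge at $u$ and merge at $v$, the four segments $P[u,w]$, $Q[u,w]$, $P[w,v]$, $Q[w,v]$ are all nonempty. I will choose $w$ as the first vertex of $P[u,v]$ after $u$ that lies on $Q[u,v]$, so that $P[u,w]\neq Q[u,w]$. By acyclicity (Proposition~\ref{prop:acyclicity}), the concatenations $P':=Q[s,w]\cup P[w,t]$ and $Q':=P[s,w]\cup Q[w,t]$ are simple $s$-$t$ paths. I then apply Step 1 to two auxiliary pairs.

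The pair $(P,P')$ diverges at $u$ and merges at $w$ with common suffix $P[w,t]$. Writing $a_1=c(P[u,w])$ and $b_1=c(Q[u,w])$, Step 1 gives $a_1+b_1\le c(P[w,t])\le d_1:=c(P[w,v])$.

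The pair $(P,Q')$ shares the prefix $P[s,w]$, diverges at $w$ and merges at $v$. With $e_1=c(Q[w,v])$, Step 1 gives $d_1+e_1\le c(P[s,w])\le a_1$.

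Combining the two bounds, $a_1+b_1\le d_1\le a_1-e_1<a_1$, which contradicts $b_1>0$.

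\textbf{Main obstacle.} The delicate point is Step 2's bookkeeping. I must check that $P'$ and $Q'$ really are paths, and that each auxiliary pair satisfies the hypotheses of Step 1: distinct paths with a genuinely shared prefix and suffix. If $Q[w,v]$ happens to coincide with $P[w,v]$, then $(P,Q')$ is not a distinct pair. In that case I would instead take $w$ to be the last shared internal vertex, or else argue symmetrically with $Q$ in place of $P$. Choosing $w$ as first or last common vertex so that both auxiliary pairs are genuinely distinct is the step I expect to need the most care.
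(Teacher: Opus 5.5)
Your proof is correct, but it runs in the opposite order from the paper's.

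\textbf{The paper's route.} It proves disjointness first. It takes the first and last common interior vertices $u'$, $v'$, so that $Q[u,u']$ and $Q[v',v]$ are genuine bypasses of $P$. It evaluates $\gamma$ of $P$ plus one bypass exactly, as a series--parallel minimum. This yields the contradictory strict inequalities $c(P[u,u'])>c(P[v',v])$ and $c(P[v',v])>c(P[u,u'])$. Only after that does it use disjointness to compute $\gamma(P\cup Q[u,v])$ exactly and derive the additive bound.

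\textbf{Your route.} You prove the additive bound first, for any two distinct paths with a common prefix and suffix. You use only the single-arc cut estimate $\gamma(P\cup Q)\le m$, which needs no disjointness. Disjointness then follows by applying that lemma to the rerouted paths $P'$ and $Q'$.

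\textbf{Comparison.} The underlying inequality is the same in both. The paper's set-marginal inequality is exactly supermodularity for $P$ and $P[s,v']\cup Q[v',v]\cup P[v,t]$. Your version avoids exact max-flow computations and the delicate choice of $u'$ and $v'$. It also gives the capacity bound even when $u,v$ are not the maximal divergence and merge points.

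\textbf{The obstacle you flag does not arise.} ``Merge at $v$'' means $P$ and $Q$ enter $v$ by different arcs. Since $P$ has only one arc into $v$, $Q'$ contains an arc outside $P$, so $Q'\neq P$. Symmetrically, divergence at $u$ gives $P'\neq P$. This divergence, not the choice of $w$ as the first common vertex, is also why $P[u,w]\neq Q[u,w]$. So every common internal vertex $w$ works, and no fallback is needed.

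The hypothesis has to be read this way anyway. If $u,v$ were merely endpoints of some common prefix and suffix, taking $u=s$, $v=t$ would make the disjointness claim fail whenever two paths share an arc, which already happens in simple convex networks.
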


\begin{proof}
First, assume to the contrary that the subpaths $P[u,v]$ and $Q[u,v]$ share an internal vertex.
Let $u'$ be their first common interior vertex after $u$, and $v'$ be their last common interior vertex before $v$.
Note that $u'$ and $v'$ might coincide.
There are multiple parallel subpaths between $u$ and $v$, say $P[u, u']$, $Q[u, u']$, $P[v', v]$ and $Q[v', v]$.
We will derive a contradiction from these parallel subpaths.

Consider the path $P$ and the subpath $Q[v', v]$.
By the convexity of $\Gamma_D$, we have $\gamma (P \cup Q[v',v]) - \gamma (P) \geq \gamma (P[s,v'] \cup P[v,t] \cup Q[v',v]) - \gamma (P[s,v'] \cup P[v,t])$.
Note that $\gamma(P[s,v'] \cup P[v,t]) = 0$.
Expressing the other terms of the inequality with capacities gives
\begin{equation}\label{eq:cvx.ineq1}
\begin{aligned} 
&\min\{c (P[s,v']), c (P[v',v])+c (Q[v',v]), c (P[v,t])\}\\
\geq &\min\{c (P[s,v']), c (P[v',v]), c (P[v,t])\} + \min\{c (P[s,v']), c (Q[v',v]), c (P[v,t])\}.
\end{aligned}
\end{equation}
Since $\min\{c (P[s,v']), c (P[v,t])\}\geq \min\{c (P[s,v']), c (P[v',v])+c (Q[v',v]), c (P[v,t])\}$, 
it follows from inequality \eqref{eq:cvx.ineq1} that $\min\{c(P[s,v']), c(P[v,t])\}> \min\{c(P[s,v']), c(P[v',v]), c(P[v,t])\}$, implying $\min\{c(P[s,v']), c(P[v,t])\} > c(P[v',v])$.
Note that $P[u, u'] \subseteq P[s, v']$.
It follows that $c(P[u, u']) \geq c(P[s, v']) > c(P[v', v])$.
On the other hand, consider the path $P$ and the subpath $Q[u, u']$.
With a similar argument, we can show that $c(P[v', v]) \geq c(P[u', t]) > c(P[u, u'])$.
A contradiction occurs.
Thus, $P[u,v]$ and $Q[u,v]$ must be vertex-disjoint.

Next, we prove the capacity condition.
Consider the path $P$ and the subpath $Q[u,v]$.
By the convexity of $\Gamma_D$, we have $\gamma (P \cup Q[u,v]) - \gamma (P) \geq \gamma (P[s,u] \cup P[v,t] \cup Q[u,v]) - \gamma (P[s,u] \cup P[v,t])$.
Note that $\gamma(P[s,u] \cup P[v,t]) = 0$. 
As before, expressing the other terms of the inequality with capacities gives
\begin{equation}\label{eq:cvx.ineq2}
\begin{aligned} 
&\min\{c (P[s,u]), c (P[u,v])+c (Q[u,v]), c (P[v,t])\}\\
\geq &\min\{c(P[s,u]), c(P[u,v]), c(P[v,t])\} + \min\{c(P[s,u]), c(Q[u,v]), c(P[v,t])\}.
\end{aligned}
\end{equation}
Since $\min\{c (P[s,u]), c (P[v,t])\}\geq \min\{c (P[s,u]), c (P[u,v])+c (Q[u,v]), c (P[v,t])\}$, 
the inequality \eqref{eq:cvx.ineq2} implies that 
\[\min\{c (P[s,u]), c (P[v,t])\} > \min\{c (P[s,u]), c (P[u,v]), c (P[v,t])\}\]
and 
\[\min\{c (P[s,u]), c (P[v,t])\} > \min\{c (P[s,u]), c (Q[u,v]), c (P[v,t])\}.\]
It follows that $\min\{c (P[s,u]), c (P[v,t])\}> c(P[u,v])$ and $\min\{c (P[s,u]), c (P[v,t])\}> c(Q[u,v])$.
By using the inequality \eqref{eq:cvx.ineq2} again, we have
\begin{equation*}
\begin{aligned}
&\min\{c (P[s,u]), c (P[v,t])\}\\
\geq&\min\{c (P[s,u]), c (P[u,v])+c (Q[u,v]), c (P[v,t])\}\\
\geq &\min\{c (P[s,u]), c (P[u,v]), c (P[v,t])\} + \min\{c (P[s,u]), c (Q[u,v]), c (P[v,t])\}\\
= & c(P[u,v])+ c(Q[u,v]).
\end{aligned}
\end{equation*}
\end{proof}

Proposition~\ref{prop:subpath_disjointness} identifies a fundamental structural requirement for convexity:
$s$-$t$ paths must not form ``X-shaped'' crossings.
Specifically, once two paths merge at an internal vertex, they must remain coincident until the sink to avoid capacity competition.
Consequently, every internal vertex $v \in V \setminus \{s,t\}$ must satisfy either $d^-(v) = 1$ or $d^+(v) = 1$, where $d^-(v)$ and $d^+(v)$ denote the in-degree and out-degree of $v$, respectively.

\begin{definition}[Bottleneck Exclusivity]\label{def:bottleneck_exclusivity}
A network $D$ satisfies \emph{bottleneck exclusivity} if every bottleneck arc $b \in B$ belongs to exactly one $s$-$t$ path $P \in \mathcal{P}$.
\end{definition}

\begin{proposition}[Bottleneck Exclusivity]\label{prop:bottleneck_exclusivity}
If $\Gamma_D$ is convex, then the network $D$ must satisfy bottleneck exclusivity.
\end{proposition}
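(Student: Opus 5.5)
We argue by contradiction. Suppose $\Gamma_D$ is convex but some bottleneck arc $b \in B$ lies on two distinct $s$-$t$ paths. By definition of $B$ there is a path $P$ with $b \in P$ and $c(b)=c(P)$; choose a second path $Q \neq P$ with $b \in Q$. We compare the marginal contribution of a suitable arc to a small coalition and to a larger one, in the style of the proof of Proposition~\ref{prop:subpath_disjointness}.

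The first step is to simplify the configuration of $P$ and $Q$. By Proposition~\ref{prop:acyclicity}, $D$ is acyclic. By the remark after Proposition~\ref{prop:subpath_disjointness}, every internal vertex has in-degree $1$ or out-degree $1$, so two paths cannot cross in an X-shape. Hence $P \cup Q$ consists of shared segments alternating with internally disjoint detours. Since $b \in P\cap Q$, $b$ lies on a shared segment. Locate the detour of $Q$ nearest to $b$, say on the side towards $t$. This gives vertices $u,v$ with $P[s,u]=Q[s,u] \ni b$, $P[v,t]=Q[v,t]$, and $P[u,v]$, $Q[u,v]$ internally disjoint. We may first replace $Q$ by the path that follows $P$ except on this one detour; this keeps $b$ on both paths. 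The case where the detour lies on the $s$ side of $b$ is symmetric, with $b \in P[v,t]$ instead.

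The second step applies the capacity inequality of Proposition~\ref{prop:subpath_disjointness}:
\[
c(P[u,v]) + c(Q[u,v]) \le \min\{c(P[s,u]), c(P[v,t])\} \le c(b),
\]
since $b \in P[s,u]$. Both detour capacities are positive, so $c(P[u,v]) < c(b)$. Therefore $c(P) \le c(P[u,v]) < c(b)$, which contradicts $c(b)=c(P)$.

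The main obstacle is justifying the reduction to a single detour adjacent to the shared segment containing $b$, with $b$ lying on a common prefix or suffix. When $P$ and $Q$ diverge and reconverge several times, we need a clean "first divergence after $b$" argument. Let $u$ be the first vertex after $b$ along $P$ at which $Q$ leaves $P$. Let $v$ be the next vertex of $P$ after $u$ that $Q$ revisits. By acyclicity, $Q$ rejoins $P$ at such a $v$ and visits these vertices in the same order. The hybrid path $P[s,u]\,Q[u,v]\,P[v,t]$ is then a valid $s$-$t$ path, and it meets the hypotheses of Proposition~\ref{prop:subpath_disjointness} with $P$. If $Q$ agrees with $P$ from $b$ to $t$, we instead take the last divergence before $b$. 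Some divergence must exist because $P \neq Q$.

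Once this is set up, the capacity bound gives the contradiction immediately.
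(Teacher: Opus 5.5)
Your proof is correct and follows the paper's argument: you apply the capacity bound of Proposition~\ref{prop:subpath_disjointness} to $P$ and a second path that shares $b$ on a common prefix (or suffix), which gives $c(P)\le c(P[u,v])<c(b)=c(P)$. Your hybrid path $P[s,u]\,Q[u,v]\,P[v,t]$, built using acyclicity, makes rigorous the reduction to a single detour that the paper's proof takes for granted. Like the paper, you also implicitly assume strictly positive capacities.
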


\begin{proof}
Let $b \in B$. By definition, there exists at least one $s$-$t$ path $P$ such that $b$ is a bottleneck arc of $P$, i.e., $c(b) = c(P)$.
Suppose for contradiction that another distinct $s$-$t$ path $Q$ also contains $b$.
By Proposition~\ref{prop:subpath_disjointness}, $P$ and $Q$ diverge at some vertex $u$ and reconverge at some vertex $v$, such that the subpaths $P[u,v]$ and $Q[u,v]$ are internally vertex-disjoint. 
Since $b$ is common to both paths, $b$ must lie on one of the shared subpaths, either $P[s,u]$ or $P[v,t]$.
Assume, without loss of generality, that $b \in P[s,u]$. 
By Proposition~\ref{prop:subpath_disjointness}, $c(P[u,v]) + c(Q[u,v]) \leq \min \{ c(P[s,u]), c(P[v,t]) \}$.
Since $b$ is an arc of $P[s,u]$, we have $c(P[s,u])\leq c(b)$.
It follows that $c(P[u,v]) + c(Q[u,v]) \leq c(b)$, implying strict inequality $c(P[u,v]) < c(b)$.
However, the capacity of the path $P$ cannot exceed the capacity of any of its subpaths:
\[
c(b)=c(P)=\min\{c(P[s,u]), c(P[u,v]), c(P[v,t])\}\leq c(P[u,v])<c(b).
\]
A contradiction occurs.
Therefore, the $s$-$t$ path containing $b$ must be unique.
\end{proof}

\begin{definition}[Capacity Sufficiency]\label{def:capacity_sufficiency}
A network $D$ satisfies \emph{capacity sufficiency} if for every non-bottleneck arc $e \in E \setminus B$, its capacity satisfies $c(e) \ge \sum_{P \in \mathcal{P}_e} c(P)$.
\end{definition}

\begin{proposition}[Capacity Sufficiency]\label{prop:capacity_sufficiency}
If $\Gamma_D$ is convex, then the network $D$ must satisfy capacity sufficiency.
\end{proposition}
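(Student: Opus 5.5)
We need to show that if $\Gamma_D$ is convex, then every non-bottleneck arc $e$ satisfies $c(e)\ge \sum_{P\in\mathcal{P}_e} c(P)$. The plan is to compare $\gamma(\mathcal{U}_e)$, where $\mathcal{U}_e=\bigcup_{P\in\mathcal{P}_e}P$, with the sum of the path capacities, using supermodularity. Convexity gives superadditivity-type inequalities when coalitions are glued along a common arc. We exploit this by adding the paths of $\mathcal{P}_e$ one at a time and tracking the marginal contribution of the ``private'' part of each path.

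\textbf{Setup.} Enumerate $\mathcal{P}_e=\{P_1,\dots,P_k\}$. By Proposition~\ref{prop:acyclicity}, $D$ is acyclic. By Proposition~\ref{prop:subpath_disjointness}, any two paths through $e$ share an initial segment and a terminal segment, and their middle parts are internally vertex-disjoint. Set $S_j=P_1\cup\dots\cup P_j$ and $R_j=P_j\setminus S_{j-1}$, the arcs of $P_j$ not yet present.

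\textbf{Key inequality.} Apply convexity in its set form, $\gamma(A\cup R)-\gamma(A)\ge\gamma(A'\cup R)-\gamma(A')$ for $A'\subseteq A$. Take $A=S_{j-1}$ and $A'=P_j\setminus R_j\subseteq S_{j-1}$. Since $A'$ is $P_j$ minus at least one arc, it contains no $s$-$t$ path (any path inside $A'\subseteq P_j$ would have to be $P_j$ itself). Hence $\gamma(A')=0$ and $\gamma(A'\cup R_j)=\gamma(P_j)=c(P_j)$, because $P_j$ alone carries exactly its capacity. This yields $\gamma(S_j)-\gamma(S_{j-1})\ge c(P_j)$. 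Telescoping gives $\gamma(\mathcal{U}_e)\ge\sum_{P\in\mathcal{P}_e}c(P)$.

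A technicality arises if $R_j$ is empty, i.e.\ $P_j\subseteq S_{j-1}$. This can be handled by instead applying convexity arc-by-arc: remove a single arc $a\in P_j$ lying in the middle part (outside the common prefix and suffix with earlier paths). Alternatively, one can order the paths so that each has a new arc. Acyclicity together with the X-free structure of Proposition~\ref{prop:subpath_disjointness} should guarantee that $\mathcal{U}_e$ is a union of two trees glued at $e$, so every path has a private arc. I expect verifying this to be the main technical step. The cleaner fallback is the single-arc version: for $a\in P_j\setminus S_{j-1}$, compare $\gamma(S_{j-1}\cup P_j)-\gamma(S_{j-1}\cup P_j\setminus\{a\})$ with $\gamma(P_j)-\gamma(P_j\setminus\{a\})=c(P_j)$, and argue inductively.

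\textbf{Upper bound.} Next we bound $\gamma(\mathcal{U}_e)$ from above by $c(e)$. Every $s$-$t$ path in $\mathcal{U}_e$ passes through $e$: the structure is ``$s$-side branching into $e$, then branching out to $t$'', and a path avoiding $e$ would create a crossing forbidden by Proposition~\ref{prop:subpath_disjointness}. Hence $\{e\}$ is an $s$-$t$ cut of $D_{\mathcal{U}_e}$, so $\gamma(\mathcal{U}_e)\le c(e)$. Combined with the lower bound, $c(e)\ge\sum_{P\in\mathcal{P}_e}c(P)$.

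\textbf{Where the non-bottleneck hypothesis enters.} The argument above does not use $e\notin B$. If $e$ is a bottleneck, Proposition~\ref{prop:bottleneck_exclusivity} gives $|\mathcal{P}_e|=1$, and then the inequality holds trivially with equality. So the statement holds for all arcs, and the definition simply restricts attention to non-bottleneck arcs.

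\textbf{Main obstacle.} The delicate step is the claim that every $s$-$t$ path in $\mathcal{U}_e$ uses $e$. A path could, for example, follow $P_1$ up to a vertex before $e$ and then jump onto the suffix of $P_2$ after $e$. I would rule this out with Proposition~\ref{prop:subpath_disjointness}: such a hybrid path $H$ and $P_1$ diverge before $e$ and merge after $e$. Applying the capacity bound to the pair $(P_1,H)$ and to $(P_2,H)$, together with the internal disjointness, forces a vertex where two paths merge and then split again, i.e.\ an X-crossing, which is a contradiction. If this fails, the fallback is to bound $\gamma(\mathcal{U}_e)\le c(e)$ via a direct cut argument applied to the flow decomposition of a maximum flow in $D_{\mathcal{U}_e}$.
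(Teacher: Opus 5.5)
Your sandwich $\sum_{P\in\mathcal{P}_e}c(P)\le\gamma(\mathcal{U}_e)\le c(e)$ is a sound strategy. However, the step you flag as the main one is justified incorrectly, so as written the lower bound is not established. The telescoping needs $R_j\neq\emptyset$ for every $j$, and ``$\mathcal{U}_e$ is two trees glued at $e$'' does not imply this. Write $e=(x,y)$. If $x$ is reached from $s$ by two paths $Q_1,Q_2$ and $y$ reaches $t$ by two paths $W_1,W_2$, then $Q_2\,e\,W_2\subseteq Q_1\,e\,W_1\cup Q_1\,e\,W_2\cup Q_2\,e\,W_1$. So whichever path comes last has $R_j=\emptyset$, and its $c(P_j)$ is lost. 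Your single-arc fallback presupposes the same private arc. The conclusion is nevertheless true, by Proposition~\ref{prop:bottleneck_exclusivity}: a minimum-capacity arc of $P_j$ is a bottleneck arc, hence lies on no other $s$-$t$ path, so $R_j\neq\emptyset$ in every ordering.

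The hybrid-path worry in your upper bound needs only acyclicity, not Proposition~\ref{prop:subpath_disjointness}. Suppose an $s$-$t$ path $H\subseteq\mathcal{U}_e$ avoided $e$. Let $w$ be the tail of the first arc of $H$ lying after $e$ on some $P_2\in\mathcal{P}_e$. Then $w$ also lies before $e$ on some $P_1\in\mathcal{P}_e$, and $P_1[w,x]$, $e$, $P_2[y,w]$ close a directed cycle. This is exactly how the paper obtains $\mathcal{P}_F=\mathcal{P}_e$.

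With this repair your proof is complete and takes a genuinely different route from the paper. The paper argues by contradiction from a violating non-bottleneck arc $e$ of minimum capacity. It uses a min-cut counting argument plus minimality to show $\gamma(F)=c(e)$ and $\gamma(T)=\min\{c(e),\sum_{P\in\mathcal{P}_T}c(P)\}$. It then contradicts convexity by comparing the marginal contribution of a single bottleneck $b^*$ to $P^*\setminus\{b^*\}$ and to $F\setminus\{b^*\}$. Your argument is direct: the telescoped supermodularity inequality replaces all of this cut analysis, and it uses neither minimality nor the hypothesis $e\notin B$. It also gives more. Combine the same telescoping with monotonicity of $\gamma$ and the path-decomposition bound $\gamma(S)\le\sum_{P\in\mathcal{P}_S}c(P)$. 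This yields $\gamma(S)=\sum_{P\in\mathcal{P}_S}c(P)$ for every $S\subseteq E$, i.e.\ the dividend formula of Theorem~\ref{thm:unified_characterization} directly from convexity. After that, capacity sufficiency is just the cut bound $\gamma(\mathcal{U}_e)\le c(e)$.
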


\begin{proof}
Assume to the contrary that there exist non-bottleneck arcs that violate this condition. 
Among all such violating arcs, let $e \notin B$ be one with the minimal capacity. That is, $c(e) < \sum_{P \in \mathcal{P}_e} c(P)$, and for any other non-bottleneck arc $a \notin B$ with $c(a) < \sum_{P \in \mathcal{P}_a} c(P)$, we have $c(e) \le c(a)$.

Let $F = \bigcup_{P \in \mathcal{P}_e} P$.
By Proposition~\ref{prop:acyclicity}, no $s$-$t$ path in the subnetwork $F$ can bypass $e$, otherwise a cycle occurs.
Thus $\mathcal{P}_F = \mathcal{P}_e$.
Since every path in $\mathcal{P}_e$ passes through $e$, the set $\{e\}$ forms an $s$-$t$ cut for the subnetwork $D_F$. 
It follows that $\gamma(F) \leq c(e)$.
We claim that $\gamma(F) = c(e)$. Suppose otherwise that $\gamma(F) < c(e)$. Then there exists a minimum $s$-$t$ cut $K$ in $F$ with capacity $\sum_{a \in K} c(a) = \gamma(F) < c(e)$. 
Since $K$ is a cut in $F$, every path in $\mathcal{P}_F$ must pass through at least one arc in $K$. 
Therefore, we have
\[ \sum_{P \in \mathcal{P}_F} c(P) \leq \sum_{a \in K} \sum_{P \in \mathcal{P}_F: a \in P} c(P) \leq \sum_{a \in K} \sum_{P \in \mathcal{P}_a} c(P). \]
Given our assumption $\sum_{a \in K} c(a) < c(e)$ and $c(e) < \sum_{P \in \mathcal{P}_e} c(P)$, it follows that
\[ \sum_{a \in K} c(a) < c(e)< \sum_{P \in \mathcal{P}_e} c(P) = \sum_{P \in \mathcal{P}_F} c(P) \leq \sum_{a \in K} \sum_{P \in \mathcal{P}_a} c(P). \]
This implies there exists at least one arc $a^* \in K$ such that $c(a^*) < \sum_{P \in \mathcal{P}_{a^*}} c(P)$.
By Proposition~\ref{prop:bottleneck_exclusivity}, we have $a^* \notin B$, since otherwise $c(a^*) = \sum_{P \in \mathcal{P}_{a^*}} c(P)$.
However, $c(a^*) \le \sum_{a \in K} c(a) < c(e)$, which contradicts the minimality of $c(e)$.
Thus, $\gamma(F) = c(e)$ holds.

Let $P^* \in \mathcal{P}_e$ be an arbitrary path passing through $e$, and let $b^*$ be a bottleneck arc of $P^*$. 
By Proposition~\ref{prop:bottleneck_exclusivity}, the path containing the bottleneck arc $b^*$ is unique, meaning $b^*$ does not belong to any other path in $\mathcal{P}_F$. Note that $b^* \neq e$ since $e \notin B$.
We compare the marginal contribution of $b^*$ to two coalitions: $S = P^* \setminus \{b^*\}$ and $T = F \setminus \{b^*\}$. Note that $S \subseteq T$.

For the coalition $S$, we have
\[ \gamma(S \cup \{b^*\}) - \gamma(S) = c(P^*) - 0 = c(b^*). \]

For the coalition $T$, recall that $\gamma(T \cup \{b^*\}) = \gamma(F) = c(e)$.
Removing $b^*$ eliminates only the path $P^*$, leaving the set of remaining paths $\mathcal{P}_T = \mathcal{P}_F \setminus \{P^*\}$ unbroken. 
The maximum flow $\gamma(T)$ is determined by the capacity of $e$ and these remaining paths. 
Note that $\gamma(T) \le c(e)$ since $\{e\}$ remains a cut in $T$, and $\gamma(T) \le \sum_{P \in \mathcal{P}_T} c(P)$ since any maximum flow in $T$ decomposes into path flows along paths in $\mathcal{P}_T$ (as $D$ is acyclic and $b^* \notin T$), each bounded by its capacity.
We claim that
\begin{equation}\label{eq:gamma_T}
\gamma(T) = \min \left\{ c(e), \sum_{P \in \mathcal{P}_T} c(P) \right\}.
\end{equation}
To justify this equality \eqref{eq:gamma_T}, suppose $\gamma(T)$ were strictly less than this minimum. 
A minimum cut in $T$ would then have capacity less than $\sum_{P \in \mathcal{P}_T} c(P)$ and less than $c(e)$.
By the exact same argument used earlier for $\gamma(F) = c(e)$, this cut would contain a violating non-bottleneck arc with capacity strictly less than $c(e)$, contradicting the minimality of $c(e)$. 
Thus, the marginal contribution is
\begin{equation}\label{eq:marginal_contribution}
\gamma(T \cup \{b^*\}) - \gamma(T) = \max \left\{ 0, c(e) - \sum_{P \in \mathcal{P}_T} c(P) \right\}.
\end{equation}
Using the assumption $c(e) < \sum_{P \in \mathcal{P}_e} c(P)=\sum_{P \in \mathcal{P}_F} c(P)$, we have $c(e) - \sum_{P \in \mathcal{P}_T} c(P) < c(P^*) = c(b^*)$. 
Since $c(b^*) > 0$, the equation \eqref{eq:marginal_contribution} implies
\[ \gamma(T \cup \{b^*\}) - \gamma(T) < c(b^*). \]
It follows that $\gamma(T \cup \{b^*\}) - \gamma(T) < \gamma(S \cup \{b^*\}) - \gamma(S)$, which contradicts the convexity of $\Gamma_D$.
\end{proof}

In summary, acyclicity, bottleneck exclusivity, and capacity sufficiency constitute the key structural framework of convex flow games. 
These constraints ensure that paths do not compete for capacity, as reflected in the divergence-only architecture of the network. 
Consequently, they provide the foundation for the dual analysis presented in the following subsection.

\subsection{Dual Necessity for Convexity}
\label{subsec:dual_necessity}

We now introduce dual separability and dual monotonicity, two refined properties that bridge the gap between network structure and convexity. 
Although these dual properties are necessary consequences of convexity (as established in Subsection~\ref{subsec:theorem}), we focus here on how they are induced by the structural requirements derived above. 
Specifically, we show that the structural decoupling of paths leads to the separability and monotonicity of optimal dual solutions.

\begin{definition}[Dual Separability]\label{def:dual_separability}
A flow network $D$ is said to exhibit \emph{dual separability} if there exists an optimal dual solution $\boldsymbol{y}^*$ such that the max-flow value decomposes into the sum of path capacities:
\[\sum_{e \in E} c(e) y^*_e = \sum_{P \in \mathcal{P}} c(P). \]
\end{definition}

\begin{definition}[Dual Monotonicity]\label{def:dual_monotonicity}
A flow network $D$ is said to exhibit \emph{dual monotonicity} if there exists a family of optimal dual solutions $\{\boldsymbol{y}^*_S\}_{S \subseteq E}$ for the max-flow problem on each subnetwork $D_S$ such that for any $S \subseteq T \subseteq E$, the corresponding dual vectors satisfy $y^*_{S,e} \le y^*_{T,e}$ for all $e \in S$.
\end{definition}

To bridge the structural requirements to the dual properties, we assume that the network $D$ satisfies acyclicity, bottleneck exclusivity, and capacity sufficiency. 
Under these conditions, we introduce \emph{representative bottlenecks}. 
For each path $P \in \mathcal{P}$, we pick an arbitrary bottleneck $b_P \in B$ to serve as its \emph{representative bottleneck}. 
Since a path may possess multiple bottlenecks, we select exactly one bottleneck to be the unique representative for each path.
For each $S\subseteq E$, the \emph{representative bottleneck set} $B^*_S$ of the subnetwork $D_S$ is the set of representative bottlenecks of the paths in $D_S$:
\[ B^*_S = \{b_P \mid P \in \mathcal{P}_S \}. \]
The construction of $B^*_S$ bridges these structural requirements to the desired dual properties. 
Specifically, bottleneck exclusivity assigns each path $P \in \mathcal{P}_S$ to a distinct representative $b_P \in B^*_S$. 
Capacity sufficiency further ensures that $c(b_P) = c(P)$, which implies that each representative in $B^*_S$ is uniquely associated with exactly one path in $\mathcal{P}_S$. 
This one-to-one correspondence ensures that the total capacity of $B^*_S$ exactly matches the sum of path capacities, forming the basis for dual separability. 
Furthermore, since the mapping $P \mapsto b_P$ is fixed globally, the set $B^*_S$ preserves the subset relationship of the available paths, ensuring dual monotonicity.

\begin{proposition}[Dual Separability]\label{prop:dual_separability}
Let $D$ be a network satisfying acyclicity, bottleneck exclusivity, and capacity sufficiency. 
Then for every $S \subseteq E$, the subnetwork $D_S$ exhibits dual separability. 
Specifically, there exists an optimal dual solution $\boldsymbol{y}^*_S$ for the max-flow problem on every subnetwork $D_S$ such that the max-flow value decomposes into the sum of path capacities: 
\[\sum_{e \in S} c(e) y^*_{S,e} = \sum_{P \in \mathcal{P}_S} c(P).\]
\end{proposition}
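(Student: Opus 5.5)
The proof is by weak duality, using representative bottlenecks. Fix $S \subseteq E$. I will build a feasible primal solution and a feasible dual solution for $D_S$, both with objective value $\sum_{P \in \mathcal{P}_S} c(P)$. Weak duality then makes both optimal, which gives the claimed identity with $\boldsymbol{y}^*_S$ the constructed dual vector. If $\mathcal{P}_S = \emptyset$, then $\boldsymbol{y}^*_S = \boldsymbol{0}$ works trivially, so assume otherwise.

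\emph{Dual side.} Define $y^*_{S,e} = 1$ if $e \in B^*_S$ and $y^*_{S,e} = 0$ otherwise. This vector is non-negative. It is feasible because every path $P \in \mathcal{P}_S$ contains its own representative $b_P \in B^*_S$, so $\sum_{e \in P} y^*_{S,e} \ge 1$.

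The objective equals $\sum_{b \in B^*_S} c(b)$. To turn this into $\sum_{P \in \mathcal{P}_S} c(P)$, I need the map $P \mapsto b_P$ to be injective on $\mathcal{P}_S$. Bottleneck exclusivity (Definition~\ref{def:bottleneck_exclusivity}) gives this: $b_P$ lies on exactly one $s$-$t$ path of $D$, namely $P$, so distinct paths have distinct representatives. Since $c(b_P) = c(P)$ by the definition of a representative bottleneck, the objective is $\sum_{P \in \mathcal{P}_S} c(P)$.

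\emph{Primal side.} Set $z_P = c(P)$ for every $P \in \mathcal{P}_S$. Its value is again $\sum_{P \in \mathcal{P}_S} c(P)$. The work is in checking the capacity constraint $\sum_{P \in \mathcal{P}_S : e \in P} c(P) \le c(e)$ for each $e \in S$, which I split into two cases.
\begin{enumerate}
\item[(i)] If $e \in B$, bottleneck exclusivity says $e$ lies on a unique path $P_0$ of $D$. The left-hand side is then either $0$ or $c(P_0) \le c(e)$, since a path's capacity is the minimum over its arcs.
\item[(ii)] If $e \notin B$, the left-hand side is at most $\sum_{P \in \mathcal{P}_e} c(P)$, because $\mathcal{P}_S \cap \mathcal{P}_e \subseteq \mathcal{P}_e$ and capacities are non-negative. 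Capacity sufficiency (Definition~\ref{def:capacity_sufficiency}) bounds this by $c(e)$.
\end{enumerate}

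Acyclicity plays no role in this argument beyond the fact that path-based LP \eqref{eq:primal_flow} already encodes $\gamma(S)$. With both feasible solutions in hand, weak duality between \eqref{eq:primal_flow} and \eqref{eq:dual_flow} makes both optimal. So $\boldsymbol{y}^*_S$ is an optimal dual solution with $\sum_{e \in S} c(e) y^*_{S,e} = \sum_{P \in \mathcal{P}_S} c(P)$, and as a byproduct $\gamma(S) = \sum_{P \in \mathcal{P}_S} c(P)$.

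The step needing the most care is the primal feasibility check for non-bottleneck arcs. The paths relevant to $D_S$ are only a subset of $\mathcal{P}_e$, so I must confirm that the global capacity sufficiency condition restricts correctly to every subnetwork. It does, by the monotonicity of the sum noted in case (ii).
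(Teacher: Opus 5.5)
Your proof is correct and follows the same route as the paper: take the indicator vector of the representative bottleneck set $B^*_S$ as the dual solution, take the path flow $z_P = c(P)$ for $P \in \mathcal{P}_S$ as the primal solution, and conclude by weak duality that both are optimal. Your version is more careful than the paper's in two places. You split the primal capacity check into bottleneck arcs, handled by exclusivity, and non-bottleneck arcs, handled by capacity sufficiency. You also correctly derive $c(b_P)=c(P)$ from the choice of $b_P$ together with exclusivity, whereas the paper attributes it to capacity sufficiency.
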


\begin{proof}
For each subnetwork $D_S$, we define the dual solution $\boldsymbol{y}^*_S \in \mathbb{R}_+^{|S|}$ as the indicator vector of the representative bottleneck set $B^*_S$. 
Specifically, for any arc $e \in S$, we set $y^*_{S,e} = 1$ if $e \in B^*_S$, and $y^*_{S,e} = 0$ otherwise. 
This vector is dual-feasible because bottleneck exclusivity guarantees that each path $P \in \mathcal{P}_S$ contains exactly one representative bottleneck $b_P \in B^*_S$, ensuring $\sum_{e \in P} y^*_{S,e} = 1$. 

Next, we evaluate the objective value of this dual solution. 
The capacity sufficiency ensures that the capacity of each bottleneck equals the total capacity of the paths it belongs to.
Since exclusivity enforces that each $b \in B^*_S$ uniquely belongs to exactly one path $P \in \mathcal{P}_S$, it follows that $c(b) = c(P)$. 
Consequently, the dual objective value perfectly matches the sum of path capacities:
\[ \sum_{e \in S} c(e) y^*_{S,e} = \sum_{b \in B^*_S} c(b) = \sum_{P \in \mathcal{P}_S} c(P). \]

By weak duality, $\sum_{P \in \mathcal{P}_S} c(P)$ provides an upper bound for the max-flow value of $D_S$.
Conversely, capacity sufficiency ensures that $c(P)$ units of flow can be routed along each path $P \in \mathcal{P}_S$.
This feasible flow establishes that $\sum_{P \in \mathcal{P}_S} c(P)$ is also a lower bound for the max-flow value of $D_S$.
Hence, $\boldsymbol{y}^*_S$ is an optimal dual solution, which decomposes the max-flow value of $D_S$ into the sum of path capacities.
\end{proof}

Proposition~\ref{prop:dual_separability} admits a structural explanation via a \emph{bottleneck-disjoint path cover} for every subnetwork.
By a path cover, we mean a collection of $s$-$t$ paths whose union is the arc set of the network, and by bottleneck-disjoint, we mean that each bottleneck is contained in exactly one path from the path cover.
Specifically, this result confirms that these structural conditions ensure that every subnetwork $D_S$ possesses such a cover, where paths are completely isolated by their representative bottlenecks.

\begin{proposition}[Dual Monotonicity]\label{prop:dual_monotonicity}
Let $D$ be a network satisfying acyclicity, bottleneck exclusivity, and capacity sufficiency. 
Then $D$ exhibits dual monotonicity. 
\end{proposition}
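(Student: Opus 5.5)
We take the family of dual solutions built in the proof of Proposition~\ref{prop:dual_separability}: for each $S \subseteq E$, $\boldsymbol{y}^*_S$ is the indicator vector of the representative bottleneck set $B^*_S = \{b_P \mid P \in \mathcal{P}_S\}$. The representative map $P \mapsto b_P$ is fixed once for the whole network $D$ and does not depend on $S$. Proposition~\ref{prop:dual_separability} already shows that each $\boldsymbol{y}^*_S$ is optimal for the max-flow dual \eqref{eq:dual_flow} on $D_S$. So the only thing left is the coordinatewise inequality $y^*_{S,e} \le y^*_{T,e}$ for all $S \subseteq T \subseteq E$ and $e \in S$.

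The key step is to show $B^*_S \subseteq B^*_T$ whenever $S \subseteq T$. First, $\mathcal{P}_S = \{P \in \mathcal{P} \mid P \subseteq S\} \subseteq \{P \in \mathcal{P} \mid P \subseteq T\} = \mathcal{P}_T$. Since $b_P$ is chosen globally, every $b_P \in B^*_S$ with $P \in \mathcal{P}_S$ is also the representative of the same path $P \in \mathcal{P}_T$, so $b_P \in B^*_T$.

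Monotonicity then follows. Take $e \in S$. If $y^*_{S,e} = 1$, then $e \in B^*_S \subseteq B^*_T$, so $y^*_{T,e} = 1$. If $y^*_{S,e} = 0$, the inequality holds because $y^*_{T,e} \ge 0$.

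The argument would fail only if the representatives were chosen depending on the subnetwork. Fixing the map $P \mapsto b_P$ in $D$ removes this. We should also note that $b_P \in P \subseteq S$, so $\boldsymbol{y}^*_S$ is supported inside $S$ as required. For $S$ with $\mathcal{P}_S = \emptyset$ (including $S = \emptyset$), we take $\boldsymbol{y}^*_S = \boldsymbol{0}$, which is optimal with value $0$ and consistent with the construction. Acyclicity, bottleneck exclusivity and capacity sufficiency enter only through Proposition~\ref{prop:dual_separability}, which guarantees optimality. Feasibility holds because each path contains at least its own representative. Bottleneck exclusivity ensures no other path's representative lies on it, which is needed for the objective value to match.
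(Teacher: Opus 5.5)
Your proposal is correct and matches the paper's proof: you take the indicator vectors of $B^*_S$, which Proposition~\ref{prop:dual_separability} already shows are optimal. Because the map $P\mapsto b_P$ is fixed globally and $\mathcal{P}_S\subseteq\mathcal{P}_T$, you get $B^*_S\subseteq B^*_T$ and conclude coordinatewise, as the paper does; your extra remarks (support inside $S$, and the case $\mathcal{P}_S=\emptyset$) only make explicit what the paper leaves implicit.
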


\begin{proof}
For each subnetwork $D_S$, we define the dual solution $\boldsymbol{y}^*_S \in \mathbb{R}_+^{|S|}$ as the indicator vector of the representative bottleneck set $B^*_S$. 
Specifically, for any arc $e \in S$, we set $y^*_{S,e} = 1$ if $e \in B^*_S$, and $y^*_{S,e} = 0$ otherwise. 
As established in Proposition~\ref{prop:dual_separability}, this vector $\boldsymbol{y}^*_S$ is an optimal dual solution for the subnetwork $D_S$.

Now, consider any $S \subseteq T \subseteq E$. 
The set of available paths satisfies $\mathcal{P}_S \subseteq \mathcal{P}_T$. 
Since the canonical mapping $P \mapsto b_P$ is fixed globally, the representative bottleneck sets preserve this subset relationship:
\[ B^*_S = \{ b_P \mid P \in \mathcal{P}_S \} \subseteq \{ b_P \mid P \in \mathcal{P}_T \} = B^*_T. \]
For any arc $e \in S$, if $y^*_{S,e} = 1$, then $e \in B^*_S \subseteq B^*_T$, which implies $y^*_{T,e} = 1$. 
If $y^*_{S,e} = 0$, the inequality $y^*_{S,e} \le y^*_{T,e}$ holds trivially as $\boldsymbol{y}^*_T$ is non-negative. 
Consequently, $y^*_{S,e} \le y^*_{T,e}$ for every arc $e \in S$, confirming that the constructed family of dual solutions satisfies dual monotonicity.
\end{proof}

Proposition~\ref{prop:dual_monotonicity} admits a structural explanation via a \emph{nested min-cut family} across nested subnetworks. 
Specifically, this result confirms that these structural conditions ensure that the min-cut $B^*_S$ of any subnetwork can be monotonically expanded into a min-cut $B^*_T$ for a larger coalition.

The dual properties established in this subsection provide the optimization-theoretic foundation for our characterization results. 
Specifically, we will see that dual separability (Proposition~\ref{prop:dual_separability}) provides a complete characterization of convexity, whereas dual monotonicity (Proposition~\ref{prop:dual_monotonicity}) remains only a necessary property for convexity. 
By integrating these dual properties with the structural necessity results from Subsection~\ref{subsec:structural_necessity}, we are ready to present our main theorem, which unifies the structural, dual, and game-theoretic perspectives into a single framework.

\subsection{The Unified Characterization Theorem}
\label{subsec:theorem}

\begin{theorem}[Unified Characterization: Convexity, Structure, Duality, and Dividends]\label{thm:unified_characterization}
Let $\Gamma_D = (E, \gamma)$ be a flow game defined on a network $D=(V, E; c; s, t)$. The following statements are equivalent:
\begin{enumerate}[\emph{(}i\emph{)}]
    \item\label{cond:convexity} \textbf{Convexity}: The flow game $\Gamma_D$ is convex. 
    \item\label{cond:structure} \textbf{Network Structure}: The network $D$ satisfies acyclicity, bottleneck exclusivity, and capacity sufficiency. 
    \item\label{cond:dual_separability} \textbf{Dual Separability}: Every subnetwork $D_S$ ($S \subseteq E$) exhibits dual separability. 
    \item\label{cond:harsanyi} \textbf{Harsanyi Dividends}: The characteristic function $\gamma$ admits a non-negative path-based dividend decomposition:
    \[ \gamma = \sum_{P \in \mathcal{P}} c(P) \zeta_P. \]
\end{enumerate}
\end{theorem}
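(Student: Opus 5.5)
I will prove the cycle (i)$\Rightarrow$(ii)$\Rightarrow$(iv)$\Rightarrow$(i), and then attach (iii) by showing (ii)$\Rightarrow$(iii)$\Rightarrow$(iv).

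\textbf{(i)$\Rightarrow$(ii).} This step is already done. Propositions~\ref{prop:acyclicity}, \ref{prop:bottleneck_exclusivity} and \ref{prop:capacity_sufficiency} give exactly acyclicity, bottleneck exclusivity and capacity sufficiency for a convex $\Gamma_D$.

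\textbf{(ii)$\Rightarrow$(iii).} This is precisely Proposition~\ref{prop:dual_separability}. It shows that for every $S\subseteq E$ there is an optimal dual $\boldsymbol{y}^*_S$ with
\[
\gamma(S)=\sum_{e\in S}c(e)y^*_{S,e}=\sum_{P\in\mathcal{P}_S}c(P).
\]

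\textbf{(iii)$\Rightarrow$(iv).} Dual separability of $D_S$ states that the optimal value, which equals $\gamma(S)$ by strong duality, is $\sum_{P\in\mathcal{P}_S}c(P)$. Since $\mathcal{P}_S=\{P\in\mathcal{P}: P\subseteq S\}$ and $\zeta_P(S)=1$ exactly when $P\subseteq S$, this gives
\[
\gamma(S)=\sum_{P\in\mathcal{P}}c(P)\zeta_P(S)\quad\text{for every } S.
\]
The $\zeta_T$ form a basis, so this identity is the unique Harsanyi decomposition. Its dividends are $c(P)\ge 0$ on the path-sets $P$ and zero elsewhere. One subtlety must be handled: two distinct paths cannot have the same arc set, since a path is determined by its arc set. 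So the coefficients do not merge, although merging would in any case only add nonnegative terms.

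\textbf{(iv)$\Rightarrow$(i).} Each $\zeta_P$ is a unanimity game, hence supermodular. The marginal contribution of $i$ to $S$ is $1$ exactly when $P\subseteq S\cup\{i\}$, $P\not\subseteq S$, and $i\in P$, and this condition is monotone in $S$. A nonnegative combination of supermodular functions is supermodular, so $\gamma$ is convex. This is the standard Harsanyi criterion cited from~\cite{Hars59}.

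All the real difficulty lies in (i)$\Rightarrow$(ii), which the earlier propositions have already absorbed; capacity sufficiency was the delicate part there. In the remaining steps, the main point to verify is the flow feasibility behind Proposition~\ref{prop:dual_separability}. Routing $c(P)$ along every $P\in\mathcal{P}_S$ simultaneously must respect all capacities. For a non-bottleneck arc $e$ this is exactly capacity sufficiency, since $c(e)\ge\sum_{P\in\mathcal{P}_e}c(P)\ge\sum_{P\in\mathcal{P}_S, e\in P}c(P)$. For a bottleneck arc it follows from exclusivity, which leaves a single path through it with capacity $c(P)\le c(e)$. I will restate this check explicitly so that the equivalence (ii)$\Leftrightarrow$(iii)$\Leftrightarrow$(iv) is self-contained.
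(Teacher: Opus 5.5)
Your proposal is correct and is essentially the paper's proof: the same chain (i)$\Rightarrow$(ii)$\Rightarrow$(iii)$\Rightarrow$(iv)$\Rightarrow$(i), where your ``(ii)$\Rightarrow$(iv)'' is only the composite through (iii). It cites Propositions~\ref{prop:acyclicity}, \ref{prop:bottleneck_exclusivity}, \ref{prop:capacity_sufficiency} and~\ref{prop:dual_separability}, strong duality, and convexity of nonnegative combinations of unanimity games; your explicit flow-feasibility check and Harsanyi-uniqueness remark are harmless additions. One caveat, shared with the paper: the cited propositions silently use $c(P)>0$ (e.g.\ in Proposition~\ref{prop:arc_essentiality}), and this is genuinely needed, since with zero-capacity arcs a convex flow game can live on a network with a directed cycle, so (i)$\Rightarrow$(ii) holds only for strictly positive capacities.
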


Theorem~\ref{thm:unified_characterization} shows that dual separability provides a complete optimization-theoretic characterization of flow game convexity. 
It is important to distinguish this from the weaker property of dual monotonicity, which, though necessary, is insufficient for a complete characterization.
Consider the network $D$ with arcs $e_1=(s, u), e_2=(u, t), e_3=(u, v), e_4=(v, t)$ and respective capacities $1, 1, 3$, and $2$. 
This network fails to induce a convex flow game because it violates bottleneck exclusivity.
However, it still admits a monotonic family of optimal dual solutions:
for any coalition $S$ such that $\gamma(S) = 1$, we can set $y^*_{S,e_1} = 1$ and $0$ for all other arcs. 
This confirms the insufficiency of dual monotonicity for characterizing convexity.

\begin{proof}
We establish the equivalence of these four statements through a logical chain: (\emph{\ref{cond:convexity}}) $\implies$ (\emph{\ref{cond:structure}}) $\implies$ (\emph{\ref{cond:dual_separability}}) $\implies$ (\emph{\ref{cond:harsanyi}}) $\implies$ (\emph{\ref{cond:convexity}}).

(\emph{\ref{cond:convexity}}) $\implies$ (\emph{\ref{cond:structure}}):
The structural condition (\emph{\ref{cond:structure}}) follows directly from Propositions~\ref{prop:acyclicity}, \ref{prop:bottleneck_exclusivity}, and \ref{prop:capacity_sufficiency}. 

(\emph{\ref{cond:structure}}) $\implies$ (\emph{\ref{cond:dual_separability}}):
Proposition~\ref{prop:dual_separability} ensures that the structural conditions in (\emph{\ref{cond:structure}}) are sufficient to guarantee dual separability for all subnetworks.

(\emph{\ref{cond:dual_separability}}) $\implies$ (\emph{\ref{cond:harsanyi}}):
Suppose dual separability (\emph{\ref{cond:dual_separability}}) holds. 
By definition, for any subnetwork $D_S$, there exists an optimal dual solution $\boldsymbol{y}^*_S$ such that $\sum_{e \in S} c(e) y^*_{S,e} = \sum_{P \in \mathcal{P}_S} c(P)$. 
By the strong duality theorem, the max-flow value equals the objective value of the optimal dual solution. 
Consequently,
\[
\gamma(S) = \sum_{e \in S} c(e) y^*_{S,e} = \sum_{P \in \mathcal{P}_S} c(P) = \sum_{P \in \mathcal{P}} c(P) \zeta_P(S),
\]
which directly yields the path-based Harsanyi dividend decomposition (\emph{\ref{cond:harsanyi}}).

(\emph{\ref{cond:harsanyi}}) $\implies$ (\emph{\ref{cond:convexity}}):
Every unanimity game $\zeta_P$ is convex. 
Since $c(P) > 0$ for all $P\in \mathcal{P}$, and any non-negative linear combination of convex games remains convex, it follows that $\Gamma_D$ is convex.
\end{proof}

Our results provide a structural resolution to the complement-substitute framework of Granot and Veinott~\cite{GV85}. 
While acyclicity eliminates directed cycles, undirected cycles can still create substitute relationships via shared bottlenecks that undermine convexity.
Theorem~\ref{thm:unified_characterization} demonstrates that convexity is preserved precisely when these substitute effects are neutralized.
Specifically, bottleneck exclusivity and capacity sufficiency achieve this resolution by ensuring that path-based substitute effects are avoided.
By decoupling path capacities, these conditions ensure that each path's contribution is determined solely by its own bottleneck, even in the presence of undirected cycles.

We now show that our characterization yields an efficient recognition procedure.

\begin{lemma}\label{lem:poly_bottleneck}
Let $D=(V,E;c;s,t)$ be an acyclic network. 
The set of all bottleneck arcs $B$ can be identified precisely in $O(|E|(|V| + |E|))$ time.
\end{lemma}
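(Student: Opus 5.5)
The plan is to decide, for each arc $e=(u,v)$ separately, whether some $s$-$t$ path through $e$ has all its capacities at least $c(e)$. Indeed, $e\in B$ if and only if there is a path $P\ni e$ with $c(P)=c(e)$, which is the same as $\min_{a\in P} c(a)\ge c(e)$. So we will test, for every arc $e$, whether such a path exists in the subnetwork $D^{\ge}_e=(V,E^{\ge}_e)$, where $E^{\ge}_e=\{a\in E \mid c(a)\ge c(e)\}$. This subnetwork contains $e$ itself.

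For a fixed $e=(u,v)$, we build $E^{\ge}_e$ in $O(|E|)$ time and run two searches:
\begin{itemize}
\item a forward breadth-first search from $s$ in $D^{\ge}_e$, giving the set $V_s^e$ of vertices reachable from $s$;
\item a backward breadth-first search from $t$ (a forward search in the reversed subnetwork), giving the set $V_t^e$ of vertices that can reach $t$.
\end{itemize}
Both take $O(|V|+|E|)$ time, exactly as in the proof of Lemma~\ref{lem:network_reduction}. We then declare $e\in B$ if and only if $u\in V_s^e$ and $v\in V_t^e$. Doing this for all $|E|$ arcs gives the total running time $O(|E|(|V|+|E|))$.

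Correctness in one direction is immediate. If $P$ is a path with $e\in P$ and $c(P)=c(e)$, then every arc of $P$ lies in $E^{\ge}_e$. Hence $P[s,u]$ witnesses $u\in V_s^e$ and $P[v,t]$ witnesses $v\in V_t^e$.

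The converse is the one point where acyclicity is needed, and it is the main (mild) obstacle. Suppose we have an $s$-$u$ walk $W_1$ and a $v$-$t$ walk $W_2$ in $D^{\ge}_e$. We may shorten each to a directed path. The concatenation $W_1\,e\,W_2$ must still be shown to be a simple path, which is what the paper's definition of an $s$-$t$ path requires. Suppose $W_1$ and $W_2$ shared a vertex $w$. Then $w\to u\to v\to w$ (following $W_1[w,u]$, then $e$, then $W_2[v,w]$) would be a closed directed walk, which contains a directed cycle, contradicting acyclicity. The same argument rules out $e$ appearing inside $W_1$ or $W_2$.

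So the concatenation is an $s$-$t$ path $P$ containing $e$. Every arc of $P$ has capacity at least $c(e)$, hence $c(P)=c(e)$ and $e\in B$. Without acyclicity this splicing could fail, since the two searches might use overlapping vertices. This explains why the lemma assumes $D$ acyclic; in the intended algorithm, acyclicity is checked first by topological sort.
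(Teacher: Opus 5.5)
Your proposal is correct and matches the paper's proof: for each arc $e=(u,v)$ you restrict to arcs of capacity at least $c(e)$ and test $s$-$u$ and $v$-$t$ reachability by graph search, for $O(|E|(|V|+|E|))$ time in total. Your acyclicity argument, that the two search paths must be vertex-disjoint so that joining them through $e$ gives a genuine simple $s$-$t$ path, fills in the step the paper only asserts.
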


\begin{proof}
Recall that an arc $e \in E$ is a bottleneck arc if there exists an $s$-$t$ path $P$ such that $e \in P$ and $c(e) = \min_{a \in P} c(a)$.
Let $e=(u,v) \in E$. 
The condition $c(e) = \min_{a \in P} c(a)$ is equivalent to requiring that every arc $a \in P$ satisfies $c(a) \ge c(e)$.
Thus, $e \in B$ if and only if there exists a path from $s$ to $u$ and a path from $v$ to $t$ in the subgraph restricted to arcs with capacity at least $c(e)$.

To identify all bottleneck arcs without evaluating every path, we define the following procedural steps. 
First, for each arc $e=(u,v) \in E$, we construct a restricted subgraph $D_e = (V, E_{\ge c(e)})$, where $E_{\ge c(e)} = \{ a \in E \mid c(a) \ge c(e) \}$. 
Next, we verify whether there exists an $s$-$u$ path and a $v$-$t$ path within $D_e$. 
If both paths exist, we add $e$ to the bottleneck set $B$. 
For each arc, constructing $D_e$ and verifying reachability via standard graph traversal (e.g., breadth-first search) takes $O(|V| + |E|)$ time. 
Since we iterate over all $|E|$ arcs, the overall procedure exactly computes the set $B$ in $O(|E|(|V| + |E|))$ time.
\end{proof}

\begin{theorem}\label{thm:poly_recognition}
Whether the flow game $\Gamma_D$ defined on a network $D=(V,E;c;s,t)$ is convex can be determined in $O(|E|(|V| + |E|))$ time.
\end{theorem}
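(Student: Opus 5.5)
We reduce convexity to the structural condition (\emph{\ref{cond:structure}}) of Theorem~\ref{thm:unified_characterization} and check each of its three parts by graph algorithms that never enumerate exponentially many paths. First, apply Lemma~\ref{lem:network_reduction} in $O(|V|+|E|)$ time so that Assumption~\ref{assumption:connectivity} holds; the removed arcs are dummy players, so convexity is unaffected. Next, test acyclicity with a topological sort in $O(|V|+|E|)$ time. If a cycle exists, Proposition~\ref{prop:acyclicity} lets us answer ``not convex''. Otherwise $D$ is a DAG, and Lemma~\ref{lem:poly_bottleneck} computes $B$ in $O(|E|(|V|+|E|))$ time.

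\textbf{Bottleneck exclusivity.} In topological order, compute $N_s(w)$, the number of $s$-$w$ paths, and $N_t(w)$, the number of $w$-$t$ paths, by dynamic programming, truncating every value at $2$. Then an arc $e=(u,v)$ lies on exactly one $s$-$t$ path iff $N_s(u)N_t(v)=1$. So exclusivity is checked in $O(|V|+|E|)$ time, and truncation keeps all numbers of constant size.

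\textbf{Capacity sufficiency.} This is the main obstacle: $\sum_{P\in\mathcal{P}_e}c(P)$ ranges over possibly exponentially many paths. The plan is to use exclusivity, which has already been verified.
\begin{enumerate}[(a)]
\item Every path has at least one bottleneck, and every bottleneck lies on a unique path. Hence $P\mapsto b_P$ is injective and $|\mathcal{P}|\le|B|\le|E|$, so all paths can be listed explicitly.
\item For each $b=(u,v)\in B$, recover its unique path $P_b$ by walking backward from $u$ to $s$ and forward from $v$ to $t$. Since $N_s(u)=N_t(v)=1$, at each step we follow the unique neighbour $w$ with $N_s(w)\ge1$ (respectively $N_t(w)\ge1$). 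This costs $O(|V|+|E|)$ per bottleneck.
\item Several bottlenecks may yield the same path. Deduplicate by giving each path a canonical label, for instance its lexicographically first bottleneck on the path.
\item For each distinct path $P$, compute $c(P)$ and add it to an accumulator $\sigma(e)$ for every $e\in P$. Since $|\mathcal{P}|\le|E|$, this costs $O(|E||V|)$.
\end{enumerate}
Afterwards $\sigma(e)=\sum_{P\in\mathcal{P}_e}c(P)$ for every arc, and capacity sufficiency is the check $c(e)\ge\sigma(e)$ for all $e\notin B$.

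\textbf{Correctness and running time.} By Theorem~\ref{thm:unified_characterization}, (\emph{\ref{cond:convexity}})$\iff$(\emph{\ref{cond:structure}}), so the procedure answers ``convex'' exactly when all three checks pass. Each check is stopped early with ``not convex'' if it fails. The total running time is dominated by Lemma~\ref{lem:poly_bottleneck} and step (b), both $O(|E|(|V|+|E|))$.
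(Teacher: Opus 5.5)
Your proposal is correct. Its skeleton matches the paper's:
\begin{itemize}
\item reduce convexity to condition (ii) of Theorem~\ref{thm:unified_characterization};
\item test acyclicity by topological sort and compute $B$ via Lemma~\ref{lem:poly_bottleneck};
\item recover one path $P_b$ per bottleneck and compare each $c(e)$ with accumulated path capacities.
\end{itemize}
The differences lie in how exclusivity and capacity sufficiency are verified, and in both places your version is the more careful one.

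\textbf{Exclusivity.} The paper asks that the restricted subgraph $D_b$ contain exactly one $s$-$t$ path $P_b$. The definition, however, concerns all $s$-$t$ paths of $D$ through $b$. Depending on how it is read, a test phrased inside $D_b$ either misses paths through $b$ that use an arc of capacity below $c(b)$, or also counts paths of $D_b$ that avoid $b$. Your truncated counts with the criterion $N_s(u)N_t(v)=1$ test the definition exactly, in linear time. They also make the extraction of $P_b$ an unambiguous unique-predecessor/unique-successor walk.

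\textbf{Capacity sufficiency.} The paper checks $c(e)\ge\sum_{\{b\in B\mid e\in P_b\}}c(b)$, which counts a path once per bottleneck arc on it. Take an arc $s\to u$ of capacity $6$ followed by disjoint branches $u\to a\to t$ and $u\to x\to t$ with arc capacities $3,3$ and $2,2$. There are two paths, of capacities $3$ and $2$, so $\gamma=3\zeta_{P_1}+2\zeta_{P_2}$ is convex and $6\ge 3+2$. Yet that sum at $s\to u$ equals $10>6$, so the test would wrongly reject. Your deduplication in step (c) is precisely what makes $\sigma(e)=\sum_{P\in\mathcal{P}_e}c(P)$ hold.

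Applying Lemma~\ref{lem:network_reduction} first is also a sensible addition, since (i)$\Rightarrow$(ii) relies on Assumption~\ref{assumption:connectivity}. The paper's route is shorter because it reuses the graphs $D_b$ from Lemma~\ref{lem:poly_bottleneck}. Yours keeps the same $O(|E|(|V|+|E|))$ bound and actually certifies the three conditions as they are defined.
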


\begin{proof}
By Theorem~\ref{thm:unified_characterization}, verifying the convexity of $\Gamma_D$ reduces to validating the structural conditions on the network $D$. 
We perform this verification explicitly in three steps:

First, we check whether $D$ is acyclic. 
This can be done using a standard topological sort in $O(|V| + |E|)$ time. If a cycle is detected, the game is not convex.

Second, we identify the set of bottleneck arcs $B$. 
As established in Lemma~\ref{lem:poly_bottleneck}, extracting this set takes exactly $O(|E|(|V| + |E|))$ time.

Next, to validate bottleneck exclusivity, we ensure that for each bottleneck arc $b \in B$, there is exactly one $s$-$t$ path $P_b$ in its restricted subgraph $D_b$ (as defined in Lemma~\ref{lem:poly_bottleneck}). 
If multiple paths are detected for any $b \in B$, the network violates exclusivity and the game is not convex. 
Since $D_b$ is acyclic, verifying the uniqueness of these paths takes $O(|V|+|E|)$ time per bottleneck. 
Summing over all bottleneck arcs, this step requires $O(|E|(|V|+|E|))$ time overall. 

Finally, to validate capacity sufficiency, we iterate through every non-bottleneck arc $e \notin B$ and verify that $c(e) \ge \sum_{\{b \in B \mid e \in P_b\}} c(b)$. 
As there are at most $|B| \le |E|$ such paths, each of length at most $|V|$, this check completes in $O(|V||E|)$ time.

Consequently, this procedure establishes an efficient $O(|E|(|V|+|E|))$ recognition algorithm for flow game convexity, bypassing the exponential complexity of coalitional evaluation.
\end{proof}

\subsection{Path-Based Profit Sharing of Convex Flow Games}
\label{subsec:path_profit_sharing}

Theorem~\ref{thm:unified_characterization} implies that convex flow games have structured and efficient profit-sharing mechanisms.
The path-unanimity decomposition $\gamma = \sum_{P\in \mathcal{P}} c(P) \zeta_P$ allows for a construction of stable allocations via paths.

We begin by establishing the geometric decomposition of the core.
By a classical result of Shapley~\cite{Shap71}, the core decomposes as the Minkowski sum of the individual path cores:
\begin{equation}\label{eq:core_decomposition}
\mathcal{C}(\Gamma_D) = \sum_{P \in \mathcal{P}} c(P) \mathcal{C}(\zeta_P).
\end{equation}
The core of each path-unanimity game $\zeta_P$ is a simplex.
Specifically, $\mathcal{C}(\zeta_P) = \mathrm{conv}\{ \mathbf{1}_e : e \in P \}$, where $\mathbf{1}_e$ denotes the standard basis vector for arc $e$.
Equation~\eqref{eq:core_decomposition} shows that the core is a Minkowski sum of scaled path-simplices.

The Minkowski sum property implies that any core allocation $\boldsymbol{x} \in \mathcal{C}(\Gamma_D)$ can be decomposed into path-based components.
We write $\boldsymbol{x} = \sum_{P \in \mathcal{P}} \boldsymbol{x}_P$, where each $\boldsymbol{x}_P$ belongs to the scaled core $c(P) \mathcal{C}(\zeta_P)$.
For each path $P$, the vector $\boldsymbol{x}_P$ must satisfy $\sum_{e \in P} x_{P,e} = c(P)$ and $x_{P,e} \ge 0$.
We can express these components as $x_{P,e} = \alpha_{P,e} c(P)$.
Here, $\boldsymbol{\alpha}_P = (\alpha_{P,e})_{e \in P}$ is a weight vector satisfying $\sum_{e \in P} \alpha_{P,e} = 1$ and $\alpha_{P,e} \ge 0$.
Since the decomposition $\gamma(S) = \sum_{P \in \mathcal{P}_S} c(P)$ holds for every coalition $S$, the same construction applies to each subgame.
This allows us to define a population monotonic allocation scheme (PMAS).
For any coalition $S$, we restrict the profit distribution to the paths fully contained in $S$.
We define the allocation to arc $e$ in coalition $S$ as follows:
\begin{equation}\label{eq:pmas_constructive}
x_{S,e} = \sum_{P \in \mathcal{P}_e \cap \mathcal{P}_S} \alpha_{P,e} c(P).
\end{equation}

We can easily verify that this scheme is a valid PMAS.
Efficiency follows from the normalization property $\sum_{e \in P} \alpha_{P,e} = 1$:
\[
\sum_{e \in S} x_{S,e} = \sum_{e \in S} \sum_{P \in \mathcal{P}_e \cap \mathcal{P}_S} \alpha_{P,e} c(P) = \sum_{P \in \mathcal{P}_S} c(P) \sum_{e \in P} \alpha_{P,e} = \gamma(S).
\]
The scheme is population monotonic because dividends and weights are non-negative.
As the coalition expands from $S$ to $T$, the set of contained paths grows.
This ensures that $x_{S,e} \le x_{T,e}$ for all $e \in S \subseteq T$.
We can also assign all dividends to bottleneck arcs by setting $\alpha_{P,b_P} = 1$.
This aligns with the optimal dual solution in Subsection~\ref{subsec:dual_necessity}, where only bottleneck arcs carry positive dual prices.

The choice of weights $\alpha_{P,e}$ determines the specific allocation within the core.
A standard and symmetric solution concept is the \emph{Shapley value}~\cite{Shap53}, defined as the average marginal contribution across all arrival orders:
\begin{equation}\label{eq:shapley_def}
\Phi_e(\gamma) = \sum_{S \subseteq E \setminus \{e\}} \frac{|S|! (|E|-|S|-1)!}{|E|!} \left[ \gamma(S \cup \{e\}) - \gamma(S) \right].
\end{equation}
For convex games, the Shapley value lies at the barycenter of the core~\cite{Shap71}. 
Although computing~\eqref{eq:shapley_def} is generally exponential, the path-unanimity decomposition allows for an efficient expression.
For a path-unanimity game $\zeta_P$, the Shapley value distributes one unit of value equally among the arcs in $P$.
This assigns $1/|P|$ to each arc $e \in P$ for the game $\zeta_P$.
Since $\gamma$ is a sum of these games, we sum these contributions across all paths containing $e$.
This leads to the following path-based expression:
\begin{equation}\label{eq:shapley_value}
\Phi_e(\gamma) = \sum_{P \in \mathcal{P}_e} \frac{c(P)}{|P|}.
\end{equation}
This corresponds to setting $\alpha_{P,e} = 1/|P|$ for all $e \in P$ in the PMAS construction~\eqref{eq:pmas_constructive}.

\section{Conclusion}
\label{sec:con}

Our work identifies the precise structural boundaries of flow game convexity, showing that it stems from the structural decoupling of path contributions. 
By establishing the equivalence between network acyclicity, bottleneck exclusivity, and dual separability, we explain why convexity is inherently fragile: substitute relationships induced by network cycles easily disrupt the dual decomposition required for stable cooperation.

Our characterization resolves the apparent paradox in the complement-substitute framework of Granot and Veinott~\cite{GV85}.
While their structural framework defines arc relationships purely via cycle orientations, we show that in flow games, substitution only undermines convexity when paths compete for a shared bottleneck.
By imposing bottleneck exclusivity and capacity sufficiency, we eliminate these active substitute effects and ensure that each path's capacity is determined solely by its own bottleneck.
This ensures that the network remains functionally ``all-complementary'' from an optimization perspective, even when its underlying topology contains undirected cycles.
Furthermore, given the equivalence between flow games and non-negative totally balanced games established by Kalai and Zemel~\cite{KZ82b}, our characterization applies to this fundamental class of cooperative games.

Our characterization also yields an efficient recognition algorithm for flow game convexity, bypassing the exponential complexity of coalitional evaluation.
Beyond recognition, the path-based decomposition enables the constructive design of stable profit-sharing mechanisms.
It allows us to derive closed-form expressions for both population monotonic allocation schemes (PMAS) and the Shapley value directly from the network structure.

Several directions for future research remain open.
While we have fully characterized convexity, the structural conditions governing population monotonicity without convexity are not yet understood.
Investigating how the relaxation of our structural conditions affects the existence of PMAS would provide deeper insights into the boundary between convexity and other stability properties in cooperative games.


\bibliographystyle{habbrv}
\bibliography{reference}

@incollection{Hars59,
  author    = {Harsanyi, John C.},
  title     = {A bargaining model for the cooperative $n$-person game},
  booktitle = {Contributions to the Theory of Games, Volume IV},
  editor    = {Luce, Robert D. and Tucker, Albert W.},
  pages     = {325--356},
  publisher = {Princeton University Press},
  year      = {1959}
}

@book{Schr03,
  title     = {{Combinatorial Optimization} - {Polyhedra} and {Efficiency}},
  publisher = {Springer},
  author    = {Schrijver, Alexander},
  year      = {2003}
}

@article{DIN99,
  author  = {Deng, Xiaotie and Ibaraki, Toshihide and Nagamochi, Hiroshi},
  title   = {Algorithmic Aspects of the Core of Combinatorial Optimization Games},
  journal = {Mathematics of Operations Research},
  volume  = {24},
  number  = {3},
  pages   = {751-766},
  year    = {1999}
}

@article{DP94,
  title     = {On the complexity of cooperative solution concepts},
  author    = {Deng, Xiaotie and Papadimitriou, Christos H.},
  journal   = {Mathematics of Operations Research},
  volume    = {19},
  number    = {2},
  pages     = {257--266},
  year      = {1994},
  publisher = {INFORMS}
}

@article{FKK01,
  title     = {On the computation of the nucleolus of a cooperative game},
  author    = {Faigle, Ulrich and Kern, Walter and Kuipers, Jeroen},
  journal   = {International Journal of Game Theory},
  volume    = {30},
  number    = {1},
  pages     = {79--98},
  year      = {2001},
  publisher = {Springer}
}

@article{MPS72,
  title     = {The kernel and bargaining set for convex games},
  author    = {Maschler, Michael and Peleg, Bezalel and Shapley, Lloyd S.},
  journal   = {International Journal of Game Theory},
  volume    = {1},
  number    = {1},
  pages     = {73--93},
  year      = {1972},
  publisher = {Springer}
}

@article{Spru90,
  title     = {Population monotonic allocation schemes for cooperative games with transferable utility},
  author    = {Sprumont, Yves},
  journal   = {Games and Economic Behavior},
  volume    = {2},
  number    = {4},
  pages     = {378--394},
  year      = {1990},
  publisher = {Elsevier}
}

@article{FZCD02,
  title   = {On computational complexity of membership test in flow games and linear production games},
  volume  = {31},
  number  = {1},
  journal = {International Journal of Game Theory},
  author  = {Fang, Qizhi and Zhu, Shanfeng and Cai, Maocheng and Deng, Xiaotie},
  year    = {2002},
  pages   = {39--45}
}

@article{Shap71,
  title   = {Cores of convex games},
  volume  = {1},
  number  = {1},
  journal = {International Journal of Game Theory},
  author  = {Shapley, Lloyd S.},
  year    = {1971},
  pages   = {11--26}
}

@article{MPS79,
  title   = {Geometric properties of the kernel, nucleolus, and related solution concepts},
  volume  = {4},
  number  = {4},
  journal = {Mathematics of Operations Research},
  author  = {Maschler, M. and Peleg, B. and Shapley, L. S.},
  year    = {1979},
  pages   = {303--338}
}

@article{PRB06,
  title   = {The nucleolus of balanced simple flow networks},
  volume  = {54},
  number  = {1},
  journal = {Games and Economic Behavior},
  author  = {Potters, Jos and Reijnierse, Hans and Biswas, Amit},
  year    = {2006},
  pages   = {205--225}
}

@article{GG92,
  title   = {On some network flow games},
  volume  = {17},
  number  = {4},
  journal = {Mathematics of Operations Research},
  author  = {Granot, Daniel and Granot, Frieda},
  year    = {1992},
  pages   = {792--841}
}

@article{KP09,
  title   = {On the core and $f$-nucleolus of flow games},
  journal = {Mathematics of Operations Research},
  volume  = {34},
  number  = {4},
  author  = {Kern, Walter and Paulusma, Dani\"{e}l},
  year    = {2009},
  pages   = {981--991}
}

@article{DFS06,
  title   = {Finding nucleolus of flow game},
  journal = {In: Proceedings of the 17th annual ACM-SIAM Symposium on Discrete Algorithms (SODA 2006)},
  author  = {Deng, Xiaotie and Fang, Qizhi and Sun, Xiaoxun},
  year    = {2006},
  pages   = {124--131}
}

@article{RMPT96,
  title   = {Simple flow games},
  volume  = {16},
  number  = {2},
  journal = {Games and Economic Behavior},
  author  = {Reijnierse, Hans and Maschler, Michael and Potters, Jos and Tijs, Stef},
  year    = {1996},
  pages   = {238--260}
}

@article{KZ82a,
  title   = {Generalized network problems yielding totally balanced games},
  volume  = {30},
  number  = {5},
  journal = {Operations Research},
  author  = {Kalai, Ehud and Zemel, Eitan},
  year    = {1982},
  pages   = {998--1008}
}

@article{KZ82b,
  title   = {Totally balanced games and games of flow},
  volume  = {7},
  number  = {3},
  journal = {Mathematics of Operations Research},
  author  = {Kalai, Ehud and Zemel, Eitan},
  year    = {1982},
  pages   = {476--478}
}

@article{GV85,
  title     = {Substitutes, complements and ripples in network flows},
  author    = {Granot, Frieda and Veinott, Jr, Arthur F.},
  journal   = {Mathematics of Operations Research},
  volume    = {10},
  number    = {3},
  pages     = {471--497},
  year      = {1985},
  publisher = {INFORMS}
}

@incollection{Shap53,
  author    = {Shapley, Lloyd S.},
  title     = {A value for $n$-person games},
  booktitle = {Contributions to the Theory of Games, Volume II},
  editor    = {Kuhn, Harold W. and Tucker, Albert W.},
  pages     = {307--317},
  publisher = {Princeton University Press},
  year      = {1953}
}

\end{document}